\newcommand{\templatetitle}{Preference for Verifiability} 
\date{February 2026}
\title{\templatetitle}
\NewCommandCopy{\oldauthor}{\author}\NewCommandCopy{\oldaffil}{\affil}\newcommand{\templateauthors}{}
\renewcommand{\author}[2][2]{%
  \ifdefempty{\templateauthors}
    {\appto\templateauthors{#2}}
    {\appto\templateauthors{, #2}}
    \oldauthor[#1]{#2}%
  }
  \renewcommand{\affil}[2][2]{\oldaffil[#1]{%
      \begin{minipage}[t]{12.5cm}\protect\footnotesize
        #2
      \end{minipage}%
    }}
\author[1,2]{Rommeswinkel, Hendrik}
\affil[1]{Graduate School of Economics, Hitotsubashi University \par 2-ch\={o}me-1 Naka, Kunitachi City, Tokyo 186-0004, Japan}
\affil[2]{Waseda Institute for Advanced Study, Waseda University \par 1-ch\={o}me-104 Totsukamachi, Shinjuku City, Tokyo 169-8050, Japan}
\newcommand{\templatesubject}{Decision Theory} 
\newcommand{\templatekeywords}{Verifiability, uncertainty, Choquet expected
  utility, greenwashing, principal-agent} 
\newcommand{\templatejelcodes}{D81, Q54, M14} 
\newcommand{\templatethanks}{\thanks{I thank presentation participants
    at SDM 2024, at the Hitotsubashi Summer Institute, at the Normative
    Economics and Public Policy Seminar, at the Karlsruhe Institute of
    Technology, at the Zukunftskolleg, University of Konstanz, at
    National Taiwan University, and at Kobe University for helpful
    comments.}}
    \title{\templatetitle\templatethanks}
\def\@seccntformat#1{\protect\makebox[0pt][r]{\csname the#1\endcsname\hspace{11pt}}}\makeatother
\newcommand{\printkeywords}{{\sc Keywords:}
  \templatekeywords}
\newcommand{\printjelcodes}{{\sc JEL Classification:}
  \templatejelcodes}
\def\@maketitle{%
  \newpage
  \null
  \vskip 2em%
  \begin{center}%
    \let \footnote \thanks
    {\LARGE \@title \par}%
    \vskip 1.5em%
    {\large
      \lineskip .5em%
      \begin{tabular}[t]{c}%
        \baselineskip=12pt
        \@author
      \end{tabular}\par}%
    \vskip 1em%
    {\large \@date}%
  \end{center}%
  \par
  \vskip 1.5em}
\def\ImportFromMnSymbol#1{%
  \DeclareFontFamily{U} {MnSymbol#1}{}
  \DeclareFontShape{U}{MnSymbol#1}{m}{n}{
    <-6> MnSymbol#15
    <6-7> MnSymbol#16
    <7-8> MnSymbol#17
    <8-9> MnSymbol#18
    <9-10> MnSymbol#19
    <10-12> MnSymbol#110
    <12-> MnSymbol#112}{}
  \DeclareFontShape{U}{MnSymbol#1}{b}{n}{
    <-6> MnSymbol#1-Bold5
    <6-7> MnSymbol#1-Bold6
    <7-8> MnSymbol#1-Bold7
    <8-9> MnSymbol#1-Bold8
    <9-10> MnSymbol#1-Bold9
    <10-12> MnSymbol#1-Bold10
    7   <12-> MnSymbol#1-Bold12}{}
  \DeclareSymbolFont{MnSy#1} {U} {MnSymbol#1}{m}{n}
}
\newcommand\DeclareMnSymbol[4]{\DeclareMathSymbol{#1}{#2}{MnSy#3}{#4}}
\DeclareMnSymbol{\ConIndepNat}{\mathrel}{A}{225}
\theoremstyle{definition}\newtheorem{definition}{Definition}}
\theoremstyle{remark}}
\theoremstyle{definition}}
\theoremstyle{definition}\newtheorem{axiom}{Axiom}}
\newtheorem{theorem}{Theorem}
\newtheorem{corollary}{Corollary}
\newtheorem{proposition}{Proposition}
\newtheorem{lemma}{Lemma}
\newcommand*{\myexamplename}{Example}
\newenvironment{example}[1][\myexamplename]{\begin{proof}[#1]}{\end{proof}}
\begin{document}

%
%

{\let\clearpage\relax%
\maketitle }
\thispagestyle{empty}
\clearpage
\begin{abstract}\noindent
  Decision makers sometimes cannot observe the consequences of their
  actions ex-post. This paper axiomatically characterizes a decision
  model in which the decision maker cares about verifying that a good
  consequence has been achieved. Preferences over acts identify a set
  of events the decision maker expects to verify. Decision makers
  choose acts maximizing, in expectation over verifiable events, the
  worst-case utility consistent with each event. A dual model captures
  decision makers who instead seek to obscure poor outcomes from
  verification. As an application, firms choosing carbon-reduction
  technologies may prefer less efficient but more verifiable
  technologies to prove emission reductions to stakeholders.

	\noindent\printkeywords

	\noindent\printjelcodes
\end{abstract} \bigskip

\thispagestyle{empty} \clearpage \setcounter{page}{1}
\newpage

\section{Introduction}\label{sec:introduction}
In many real-world decisions, the final consequences are only
partially observable. For instance, when a firm chooses a method for
CO2 emission reduction, the actual amount of CO2 removed from the
atmosphere is not directly observed but must be inferred from indirect
evidence, such as carbon compensation certificates. This lingering
ex-post uncertainty about consequences can motivate choices that
deviate from standard expected utility theory. A firm might not simply
aim to maximize the expected reduction in emissions; its incentives
are often more complex due to the presence of stakeholders. The firm
may want to {\em prove} that a significant emission reduction has been
achieved, or conversely, it may want to avert blame if stakeholders
can prove a desirable target was missed.

This paper's central theoretical contribution is to model these
motivations by providing an axiomatic foundation for preferences that
depend on the ex-post verifiability of outcomes. We ask: how does the
anticipation of what can be proven after the fact shape choice under
uncertainty? We show that behavior driven by a preference for or
against ex-post verifiability can be naturally understood through the
lens of ambiguity aversion models fulfilling comonotonic independence
\parencite{schmeidler_subjective_1989}, which have traditionally been
motivated by ex-ante uncertainty. We rationalize ambiguity
attitudes as a coherent response to unobservable consequences.

The applications we consider, such as corporate environmental claims,
are characterized by a lack of objective probabilities. This
subjective environment often gives rise to phenomena like
``greenwashing,'' where claims about expected emission reductions
greatly differ between firms and stakeholders. Consequently, we adopt
a fully subjective Savage framework with subjective mixtures
\parencite{ghirardato_subjective_2003} to identify risk preferences.

We propose two models that capture the principal-agent relationships
that can arise from imperfect ex-post verifiability using a dual-self
representation similar to \textcite{chandrasekher_dual_2022}. The
first model, {\em certification utility}, describes a decision-maker
who wants to prove a good outcome was achieved. They evaluate acts
based on the expected utility of the outcome they can prove, i.e., the
worst outcome consistent with their best proof. This captures the
behavior of a firm genuinely seeking to demonstrate its positive
environmental impact and corresponds to extreme ambiguity aversion.
The second model, {\em obfuscation utility}, captures the opposite
motive. Here, a stakeholder presents evidence, and the firm tries to
point to the most favorable interpretation consistent with that
evidence. This model provides a behavioral foundation for
greenwashing, where firms exploit ambiguity to avoid blame for poor
outcomes, and corresponds to extreme ambiguity seeking.

Our main representation theorems show that from the decision maker's
choices alone, an outside observer can identify the key elements of
the model: the presence of a hidden stakeholder, the nature of their
ex-post interaction (verification or obfuscation), and the set of
events the decision maker anticipates being verifiable along with the
probabilities of these events. This means we can distinguish between a
firm that is an expected utility maximizer, one that is genuinely
seeking verifiability, and one that is engaging in obfuscation, simply
by observing their preferences. The endogenously derived set of
verifiable events aligns with standard assumptions in contract theory,
being closed under intersections but not necessarily under
complements.

Our decision-theoretic results have policy implications for
corporate carbon responsibility (CCR). As policymakers seek to
regulate unverified environmental claims---for instance, by planning
to ``proscribe [...] generic environmental claims [...] without
proof'' and ``claims based on emissions offsetting schemes''
\parencite{council_ban_2023}---our framework provides a tool to
analyze the behavioral consequences. A welfare analysis reveals that
both certification-seeking and obfuscating behaviors can lead to
welfare losses. These losses are non-monotonic in the degree of
verifiability; that is, increasing transparency requirements may lower
welfare unless full ex-post verifiability is achieved. This suggests
that policies promoting CCR and transparency must be designed with
care, as their effectiveness depends critically on the set of actions
available to firms.

The paper proceeds as follows: Section \ref{sec:example} introduces
the example application of carbon offsetting programs in more detail.
Section \ref{sec:notation} presents the notation before Section
\ref{sec:models} discusses the two decision models. The axioms that
characterize these models are developed in Section \ref{sec:axioms}.
Comparative statics for the models can be found in Section
\ref{sec:comparative}. The paper also contributes a decision-theoretic
foundation to a model of greenwashing and on verifiability in contract
theory. Such relations to the literature are discussed in Section
\ref{sec:literature}.

\section{Introductory Example} \label{sec:example}

This section introduces an example of how the model presented in this
paper can capture certain stylized facts\footnote{In the U.N. Race to
  Zero Campaign \parencite{unfccc_race_2023}, an analysis of the
  largest firms of eight sectors \parencite{new_corporate_2023} found
  that most firms relied on renewable energy certificates to offset
  carbon emissions while only a small number of firms used
  certificates of nature-based carbon capture.
  \textcite{new_corporate_2023} concluded that ``Companies' climate
  pledges for 2030 fall well short of the required ambition and are
  inappropriately verified'' (p. 6), that ``the climate strategies of
  15 of the 24 companies [are] of low or very low integrity'' and that
  ``targets and potential offsetting plans remain ambiguous'' (p. 5).
} about how firms interact with their stakeholders in CCR issues where
beliefs are subjective and consequences are imperfectly observable,
such as carbon emission reductions. In such settings, subjective
beliefs (for example about the efficacy of the firm's actions) may
greatly differ between stakeholders and firms and there usually does
not exist an objective probability distribution to relate to. If firms
and stakeholders cannot agree ex-ante what the best course of action
is, this leaves only the ex-post outcomes for the stakeholder to
evaluate whether the firm has done well.

Suppose a firm decides between different carbon offsetting programs.
One option is to purchase certificates of nature-based carbon capture,
or more pointedly, to plant {\sc Trees}. Another option is to purchase
renewable energy certificates, {\sc RECs}, and the third option is to
improve the production processes to increase {\sc Efficiency}. For
simplicity, we assume that the total budget to be spent on the
offsetting programs is fixed.

Suppose the set of consequences $\mathscr{X} = \mathbb{R}_{+}$ is the amount of CO2
emissions avoided in megatons. Let the set of states $\mathscr{S} =
\{s,t,u\}$ consist of three mutually exclusive states. In state $u$,
there is a low supply of emission-reduction certificates (both RECs
and nature-based carbon removal). This makes purchasing such
certificates costly and the budget is only sufficient to offset a
small amount of CO2 emissions. In states $s$ and $t$, the supply of
such certificates is sufficient to offset a large amount. In state
$s$, purchasing RECs leads consumers in the energy market to
substitute from consuming a mix of sustainably and unsustainably
produced electricity to a higher share of unsustainably produced
energy. In state $t$, no such shift in behavior occurs. The options
are shown in Figure \ref{fig:carbon}.

\begin{figure}[ht]
	\centering
	\begin{tabularx}{.4\linewidth}[ht]{l|ccc}
		Act        & $s$ & $t$ & $u$ \\\hline
		Trees      & 70  & 70  & 10  \\
		RECs       & 60  & 100 & 10  \\
		Efficiency & 40  & 40  & 40
	\end{tabularx}
	\caption{Carbon Reduction Programs}
	\label{fig:carbon}
\end{figure}

Which of the three programs should a firm choose? According to
expected utility, this would depend on the risk preferences and the
subjective beliefs of the decision maker. A sufficiently high
subjective probability of $s$, $t$, or $u$ can induce any of the three
acts {\sc Trees}, {\sc RECs}, or {\sc Efficiency}, respectively, to be
optimal.

Suppose for a moment that there is no way of observing (or proving)
the exact amount of CO2 reduction achieved by the different acts
because only the total amount of emissions from many polluters can be
measured in the atmosphere. In the absence of objective probabilities
over states and when consequences are completely unverifiable, we can
think of two stereotypical behaviors of firms when interacting with
their stakeholders about carbon emissions:

A greenwashing firm may choose to purchase {\sc RECs} and may
exaggerate in their communication to stakeholders the likelihood that
state $t$ obtains. For example,
\textcite{new_corporate_2023} claim that for most firms joining the
U.N. race to zero campaign \parencite{unfccc_race_2023}, the CO2
emission reductions are not sufficiently verified. Firms may choose to
purchase {\sc RECs} even if they subjectively believe state $t$ to be
very unlikely. Such behavior can be detected by an analyst from
preferences under which uncertain acts are indifferent to their best
possible consequence.

A certification-seeking firm may choose {\sc Efficiency} to eliminate
any ex-post doubt of their stakeholders that they have achieved a
certain amount of CO2 emission reductions. For example, a firm may try
to prove to its stakeholders that its products are carbon neutral.
Such firms may choose {\sc Efficiency} even if they believe state $u$
to be very unlikely because ex post they cannot exclude this state. If
however the event $\{s,t\}$ would be ex-post verifiable, they may
choose {\sc Trees} instead. Such behavior can be detected from
preferences that treat uncertain acts indifferent to their worst
possible consequence.

In the following, we provide two models ---and axiomatic
characterizations thereof--- that correspond to these two types of
behavior. Compared to the extreme cases discussed above, we allow for
some events, for example $\{s,t\}$, to be ex-post verifiable and
subjectively determined. From an applied perspective, the main
contribution is to provide conditions on preferences over emission
reduction methods from which greenwashing behavior can be
distinguished from certification-seeking or expected utility behavior.
In addition, for any firm following either preference, we can identify
what information they expect to become available in the future.

\section{Notation}
\label{sec:notation}
Let $\mathscr{X}$ be a set of {\em consequences} and $\mathscr{S}$ a
finite set of {\em states of the world}. $\mathscr{E} \subseteq
2^{\mathscr{S}}$ denotes the sigma algebra of {\em events}. A {\em
  simple act} is a measurable function $a : \mathscr{S} \rightarrow \mathscr{X}$
with a finite image. If $E$ is an event and $a,b$ are acts, then
$a_Eb$ is the act that agrees with $a$ on all states $s \in E$ and that
agrees with $b$ on all states $s \in \overline{E} \equiv \mathscr{S} - E$. If
$x \in \mathscr{X}$, then $x$ also denotes the constant act that implements $x$ in
all states.

We analyze preferences over acts. A {\em preference relation} is a
binary relation $\succsim$ on $\mathscr{A}$. A function $U: \mathscr{A} \rightarrow \mathbb{R}$
{\em represents} $\succsim$ if $U(a) \geq U(b) \Leftrightarrow a \succsim b$. A representation $U$ is
{\em monotonic} if $U(a) \geq U(b)$ whenever for all $s \in \mathscr{S}$,
$a(s) \succsim b(s)$. The {\em certainty equivalent} of an act $a$, denoted
$[a] \in \mathscr{X}$, is a consequence such that $[a] \sim a$.

An event $E$ is said to be {\em null} if $\beta \sim \gamma E \beta$ for all $\gamma,\beta \in \mathscr{X}$
such that $\gamma \succ \beta$. An event $E$ is said to be {\em universal} if $\gamma \sim
\gamma E \beta$ for all $\gamma,\beta \in \mathscr{X}$ such that $\gamma \succ \beta$. An event $E$ is said to be
{\em essential} if $\gamma \succ \gamma E \beta \succ \beta$ for some $\gamma,\beta \in \mathscr{X}$.

Unlike in expected utility, in our axiomatization what happens on null
events may in principle still be preference relevant. We therefore
introduce the stronger notion of irrelevant events. An event $E$ is
said to be {\em irrelevant} if $\gamma_Ea \sim a$ for all $\gamma \in \mathscr{X}$
and $a \in \mathscr{A}$. A state of the world is said to be irrelevant
if it is an element of an irrelevant event. The set of events that are
not irrelevant is denoted $\mathscr{E}^*$ and the set of states that
are not irrelevant is denoted $\mathscr{S}^*$. An event that is null
need not be irrelevant. In our model, an event is only irrelevant for
the decision maker's preference if it is irrelevant in this stronger
sense we defined here.

If $\mathscr{F}$ is a set of events then $f: \mathscr{F}\rightarrow \mathbb{R}$ is called
a {\em set function}. A set function is {\em grounded} if $f(\emptyset) = 0$.
A set function is {\em normalized} if $f(\mathscr{S})=1$. A set
function that is both normalized and grounded is called a {\em
  capacity}. A set function is {\em additive} if for all $E,F \in
\mathscr{F}$ such that $E \cap F = \emptyset$, $f(E) + f(F) = f(E \cup F)$. A {\em
  probability measure} is a grounded, normalized, additive set
function. A set function is called {\em supermodular (submodular)} on
a set of events $\mathscr{G}$ if for all $E,F \in \mathscr{G}$, if $\eta(E)
+ \eta(F) \leq (\geq) \eta(E \cup F) + \eta(E \cap F)$. A set function is {\em modular} if
it is both supermodular and submodular.

\section{Decision Models} 
\label{sec:models}
We analyze two decision models. In the first decision model, the
decision maker wants to verify that a particular utility
level has been reached. In the context of our example, this may arise
because the decision maker wants to prove to a stakeholder that a
certain benefit of taking the action has been achieved.

\begin{definition}[Expected Certification Utility]
	A preference relation $\succsim$ on $\mathscr{A}$ is an expected certification utility
	if there exists a set of verifiable events $\mathscr{V} \subseteq \mathscr{E}$, closed under
	intersections and containing $\mathscr{S}^*$, a probability measure $\mu: \mathscr{E} \rightarrow
		[0,1]$, and a utility function $u: \mathscr{X} \rightarrow \mathbb{R}$ such that
	\begin{align}
		\label{eq:VerificationUtility}
		U(a) = & \int_{s \in \mathscr{S}^*} \max_{E \in \mathscr{V}: s \in E} \min_{t \in E} u(a(t)) d\mu
	\end{align}
	represents $\succsim$.
\end{definition}
$\mathscr{V}$ is called the set of {\em verifiable events}. If
$\mathscr{V} = \mathscr{E}$, then every state is verifiable and the
decision maker maximizes expected utility. The verifiable events do
not necessarily form a partition but are closed under intersections
and thus form a $\pi$-system that contains $\mathscr{S}^{*}$. The
interpretation of a verifiable event $E$ is that if the event obtains
(i.e., if the true state of the world is within $E$), then the
decision maker receives a proof that they can use to prove that this
event obtained. They then use this proof to convince the stakeholder
that at least the consequence $\min_{t \in E}u(a(t))$ has been
achieved.\footnote{The model assumes the verifiable events to be
  act-independent. The act-dependent case is not treated in the
  current paper for two reasons: first, if such verifiable events
  would be objectively given for every act, then this would make the
  analysis almost trivial. Second, if there are subjectively
  determined act-dependent verifiable events, then almost any behavior
  is possible and the verifiable events are unlikely to have
  interesting uniqueness properties.}

The decision model is a special case of an ambiguity aversion model
within the intersection of Choquet expected utility and maxmin
expected utility. However, the motivation for deviating from expected
utility is not ambiguity aversion but the lack of ex-post
verifiability of events: across verifiable events, the decision maker
follows expected utility but whenever consequences arise on events that
are not verifiable, the decision maker is extremely pessimistic.

\begin{example}
  Firm A would like to be able to prove to its stakeholders that a
  high level of reduction of CO2 emissions has been achieved. The firm
  believes that states $s$ and $u$ are unlikely, though not impossible.
  In case the event $\{s,t\}$ obtains, the firm believes it is able to
  prove this to its stakeholders. For example, after purchasing any
  amount of certificates of any kind, the firm can show these
  certificates to its stakeholders. However, the firm is unable to
  prove that $\{u\}$ obtains if certificate prices are not publicly
  observable -- it cannot use a low number of purchased certificates to
  prove that it is impossible to buy more certificates. The firm is
  also unable to prove that $\{s\}$ or that $\{t\}$ obtains since
  proving this would require observing the hypothetical behavior of
  how other market participants would behave given a purchase of {\sc
    RECs} or no purchase of {\sc RECs}. The firm can however verify
  that $\{s,t,u\}$ obtains since there is no other nonnull state. Firm
  A therefore considers two events when making a decision, $\{s,t\}$
  and $\{s,t,u\}$. With probability $\mu(\{u\})$ it can only prove that
  $\{s,t,u\}$ obtains and that the emission reduction is at least
  equal to the worst consequence of an act. With probability $\mu(\{s,t\})$
  the firm can verify that $\{s,t\}$ obtains and that the emission
  reduction is at least equal to the worst possible consequence on states
  $s$ and $t$. The firm therefore multiplies the probability of $u$
  with the utility of the worst possible consequence on $\{s,t,u\}$ and
  the probability of $\{s,t\}$ with the worst possible consequence on
  $\{s,t\}$. If the probability of state $u$ is sufficiently low, the
  firm will choose {\sc Trees}. If $u$ is sufficiently likely, the
  firm will choose {\sc Efficiency}.
\end{example}

In the second decision model, the decision maker has the desire to
obfuscate whether a bad consequence might have resulted. In the context of our
example, this may arise because a stakeholder may confront the firm
with evidence about the state of the world (e.g., that $\{s\}$ obtains
after having chosen {\sc RECs}).

\begin{definition}[Expected Obfuscation Utility]
	A preference relation $\succsim$ on $\mathscr{A}$ is an expected obfuscation utility
	if there exists a set of verifiable events $\mathscr{V}$, closed under
	intersections, a unique probability measure $\mu: \mathscr{E} \rightarrow [0,1]$, and a
	unique utility function $u: \mathscr{X} \rightarrow \mathbb{R}$ such that
	\begin{align}
		\label{eq:ValidationUtility}
		U(a) = \int_{s \in \mathscr{S}^{*}} \min_{E \in \mathscr{V}: s \in E} \max_{t \in E} u(a(t)) d\mu
	\end{align}
	represents $\succsim$.
\end{definition}
The interpretation of this model is that after a stakeholder confronts
the decision maker with the proof that an event $E$ obtains, the
decision maker points to the best possible consequence that might have
been achieved on $E$. We assume that the stakeholder has adversarial
preferences toward the decision maker, i.e., the stakeholder tries to
prove that at most a certain amount of utility has been obtained. The
stakeholder always uses all available verifiable information (chooses
the smallest event in $\mathscr{V}$ that contains the true state) and
the decision maker then always points to the best consequence on this
event. This model is dual to the certification utility in the sense
that for a certification utility {\em minimizing} decision maker there
exists an expected obfuscation utility with a reverse preference over
outcomes and vice versa.

\begin{example}
	Firm F wants to evade negative publicity about its CO2 emissions. It
  worries that after making their choice, some stakeholder confronts
  them with evidence which state of the world obtains and proving to
  them that a bad consequence has resulted from their chosen act. Firm F
  wants to make sure that given the evidence they might be confronted
  with, there is a state of the world consistent with this evidence in
  which their chosen action yields a good result. The firm expects
  that only state $\{s,t\}$ can be proven by the stakeholder (for
  example by low market prices for RECs.) However, the firm does not
  expect anyone to be able to obtain definitive proof that either
  $\{s\}$ or $\{t\}$ obtains. If $\{s,t\}$ is sufficiently likely, the
  firm chooses {\sc RECs} in order to be able to point to the good
  state $t$ in which 100 megatons of CO2 have been reduced. However,
  if state $\{u\}$ is sufficiently likely, the firm will choose to
  improve its {\sc Efficiency}.
\end{example}

Both of these decision models are extreme cases and unobservable
consequences may generate a rich variety of decision models that are less
extreme. The present decision models only represent a starting point
for the exploration of decision theories in which consequences cannot be
observed. The following axioms are therefore to be understood as a
starting point for a systematic analysis of preferences under
uncertainty with imperfect ex-post verifiability of consequences.

\section{Axioms}
\label{sec:axioms}
This section provides a set of axioms that characterize expected
verification and obfuscation utilities. The axioms allow us to
distinguish between verification and obfuscation seeking behavior and
should be normatively compelling if consequences are not observable.

We assume the existence of a biseparable utility representation.
Biseparable preferences allow us to identify the decision weights of
individuals on events while making minimal assumptions about behavior.
The axioms for biseparable utility are provided in
\textcite{ghirardato_risk_2001} and for convenience restated in
appendix \ref{app:biseparable}.

\setcounter{axiom}{-1}
\begin{axiom}[Biseparable Preference]\label{axiom:biseparable}
  $\succsim$ is a {\em biseparable preference} if there exists a monotonic
  representation $U: \mathscr{A} \rightarrow \mathbb{R}$, an essential event $E \in
  \mathscr{E}$, a set function $\mu: \mathscr{E} \rightarrow [0,1]$, such that for
  all $\gamma \succsim \beta$ and all events $F \in \mathscr{E}$:
  \begin{align}
    \label{eq:BiseparableUtility}
    U(\gamma F \beta) = \mu(F) U(\gamma) + (1-\mu(F)) U(\beta)
  \end{align}
  The image $U(\mathscr{X})$ is a convex set.
\end{axiom}

Thus, in the biseparable model, decision makers' preferences are only
meaningfully restricted on binary acts. Under the assumption that
certainty equivalents exist and preferences are biseparable, we can
define preference averages of consequences and acts.
These preference averages, first introduced in
\textcite{ghirardato_subjective_2003}, allow us to use
\textcite{anscombe_definition_1963} style axioms without resorting to
objective probabilities. While objective probability
lotteries can easily be implemented in a laboratory setting, for
applications in the context of greenwashing behavior, such lotteries
are often unavailable.

\begin{definition}[Preference Average]
	For all $x,y \in \mathscr{X}$ with $x \succsim y$, $z$ is a {\em preference
    average} of $x$ and $y$ if for some essential event $E$, $xEy \sim
  [{xEz}]E[{zEy}]$. $z$ is denoted by $1/2 x \oplus 1/2 y$. For two acts
  $a,b \in \mathscr{A}$, $c$ is a {\em pointwise preference average} of
  $a$ and $b$ if for all $s \in \mathscr{S}$, $c(s) = 1/2 a(s) \oplus 1/2
  b(s)$. $c$ is denoted by $1/2 a \oplus 1/2 b$.
\end{definition}

\begin{example}
  Suppose {\sc Trees} and {\sc RECs} refer to investing a fixed amount
  of money into the relevant carbon offset certificates. Suppose
  further that there are no returns to scale to either technology. If
  the decision maker is risk-neutral (in terms of the curvature of
  $u$) over the amount of CO2 emissions reduced, then $1/2$ {\sc
    Trees} $\oplus 1/2$ {\sc RECs} refers to the act in which the carbon
  emissions in every state of the world are the arithmetic average of
  {\sc Trees} and {\sc RECs}. This can be thought of as investing half
  the money into each technology. However, if the decision maker is
  not risk-neutral, this is not the case. Instead, in each state $1/2$
  {\sc Trees} $\oplus 1/2$ {\sc RECs} would offer the certainty equivalent
  of a (perfectly verifiable) equal probability lottery of the
  reduction achieved by {\sc Trees} and {\sc RECs} on that state.
\end{example}

Our first axiom, comonotonic independence, is a separability condition
from \textcite{schmeidler_subjective_1989} that is normatively
appealing in our context.
\begin{definition}[Comonotonic Acts]
	Acts $a,b \in \mathscr{A}$ are {\em comonotonic} if for all $s,s' \in \mathscr{S}$,
	\begin{itemize}
		\item $a(s) \succ a(s') \Rightarrow b(s) \succsim b(s')$, and
		\item $b(s) \succ b(s') \Rightarrow a(s) \succsim a(s')$.
	\end{itemize}
\end{definition}
Thus, two acts are comonotonic if they agree on the ranking of states
according to whether they achieve more desirable consequences. For
instance, two acts are not comonotonic if one yields a better
consequence in state $s$ than in $s'$, while the other yields a worse
consequence in $s$ than in $s'$. Constant acts are comonotonic with
all acts.

\begin{axiom}[Comonotonic Independence]
	$\succsim$ fulfills {\em comonotonic independence} if for all comonotonic $a,b,c$,
	$a \succsim b$ holds if and only if also
  \begin{align}
    \label{eq:ComonotonicIndependence}
    1/2 a \oplus 1/2 c \succsim 1/2 b \oplus 1/2 c.
  \end{align}
\end{axiom}
Comonotonic independence is normatively compelling in our setting
because deviations from expected utility should only arise from a lack
of verifiability. A preference reversal in
\eqref{eq:ComonotonicIndependence} should only occur if mixing with
act $c$ makes a good consequence verifiable for one act but not the
other. This is ruled out when all three acts are comonotonic. If
$a,b,c$ are comonotonic, then for any verifiable event, the state with
the worst outcome is the same for all three acts. Thus, mixing with
$c$ does not asymmetrically change what can be verified. Analogously,
also what can be obfuscated does not change: the best possible state
in each verifiable event will remain the same.

The following result, a corollary of M\"{o}bius inversion of the
capacity \parencite[e.g.,][]{chateauneuf_characterizations_1989},
makes precise why comonotonic independence is the right starting point
for modeling verifiability in decisions under uncertainty:

\begin{corollary}[Expected Verifiability
  Representation]\label{coro:ceu}
  The following statements are equivalent:
  \begin{enumerate}
  \item $\succsim$ is a biseparable preference fulfilling comonotonic
    independence.
  \item $\succsim$ can be represented by
	\begin{align}
		\label{eq:VerifiabilityUtility}
		U(a) = & p \int_{E \in \mathscr{E}^+} \min_{s \in E} u(a(s)) dp^+ \nonumber\\
           & + (1-p) \int_{E \in \mathscr{E}^-} \max_{s \in E} -u(a(s)) dp^-
	\end{align}
  where $\mathscr{E}^+$, $\mathscr{E}^-$ are disjoint subsets of
  $\mathscr{E}$, $p^+$ is a strictly positive probability measure over
  ex-post information sets $\mathscr{E}^+ $ and $p^-$ is a strictly
  positive probability measure over $\mathscr{E}^-$ and $u:
  \mathscr{X} \rightarrow \mathbb{R}$ a convex valued function.
  \end{enumerate}
\end{corollary}

This representation is effectively Choquet Expected Utility
\parencite[CEU;][]{schmeidler_subjective_1989}. That is, a
CEU-maximizing decision maker can be interpreted as maximizing an
expectation over possible ex-post information sets. On each
information set, the decision maker faces either a verification or an
obfuscating situation: if the event belongs to $\mathscr{E}^+$, the
worst consequence on that event is taken as given; if it belongs to
$\mathscr{E}^-$, the best consequence is taken as given. Thus, with
probability $p$, the decision maker expects a verification
situation---needing to prove that a good outcome occurred---with $p^+$
governing the distribution over what will be verifiable. With
probability $1-p$, they expect an obfuscating position, with $p^-$
governing the distribution over ex-post available verifiable events
and a reversed ranking of consequences. Because of this reversal of
preferences over consequences, we immediately see that comonotonic
independence together with a fixed ordering of outcomes requires the
decision maker to either be obfuscation-seeking or
certification-seeking, but not both.

While the representation nests our two main models as special cases,
it imposes no structure on which events belong to $\mathscr{E}^+$ and
which to $\mathscr{E}^-$. In particular, disjoint events
$E, F \in \mathscr{E}^+$ may have their union $E \cup F \in
\mathscr{E}^-$: the decision maker pessimistically evaluates $E$ and
$F$ individually yet optimistically evaluates any act whose outcome is
only known to lie in $E \cup F$. This is difficult to reconcile with
a coherent model of verifiability---what can be proven ex-post should
be determined by the state that occurred, not by an independent draw
from $p^+$ and $p^-$ that lives outside the state space. The axioms
we impose next restore this coherence.

We therefore now impose axioms on the structure of what is verifiable
ex-post. Our main identification tools will be {\em sensitive events}
and {\em reactive events}, that help us identify what the decision
maker expects to be ex-post verifiable: If a decision maker only cares
about the good consequences that they can verify ex-post, they will
often be unresponsive to improvements of acts on nonnull states if
these improvements are not verifiable. Conversely, if on an event the
decision maker dislikes a worsening of the consequence on any state,
then the decision maker must expect that if the event occurs, they
will be for sure in possession of {\em some} proof that the event (or
a subevent) obtained. We thus define:
\begin{definition}[Sensitive event]
	Let $\gamma \succ \beta$. An event $E$ is {\em sensitive } if $\gamma_E\beta \succ
  \gamma_{E-F}\beta$ for all nonnull events $F \subset E$.
\end{definition}
Thus, an event is sensitive if getting a worse outcome on any nonnull
subevent decreases preference. For an expected utility maximizer, all
nonnull events are sensitive. In certification utility, an event $E$
is sensitive if and only if the verifiable subsets of $E$ form a cover
of $E$---that is, every state in $E$ belongs to at least one verifiable
subevent of $E$.

In obfuscation utility, when a stakeholder confronts the decision
maker with evidence that event $E$ obtained, the decision maker
responds by pointing to the {\em best} possible consequence within
$E$. Hence, even a single good outcome anywhere in $E$ provides a
defense---a strictly bad consequence can only be proven if every state
in $E$ yields a bad outcome. As a result, the decision maker strictly
benefits from any improvement on any nonnull subevent of $E$. We thus
define:

\begin{definition}[Reactive event]
	Let $\gamma \succ \beta$. An event $E$ is {\em reactive } if $\beta_{E-F}\gamma \succ
  \beta_E \gamma $ for all nonnull events $F \subset E$.
\end{definition}
A reactive event therefore requires only an improvement on any nonnull
event and this increases the preference of the decision maker. Again, in
expected utility all events are reactive. In our obfuscation utility,
reactive events are the events $E$ for which the stakeholder has a
cover of verifiable events in $\mathscr{V}$ that are all subsets of
$E$.

For example, if $E - \{s\}$ is verifiable but $E$ itself is not, a
good outcome in state $s$ never helps the decision maker. When the
true state is $s$, any verifiable event the stakeholder can use must
extend beyond $E$, so the decision maker can already point to a good
outcome outside $E$. When the true state is in $E - \{s\}$, state $s$
does not appear in the evidence at all.

The decision maker can combine proofs of events. This generates a
combined proof that the intersection of the two events has occurred.
This is the standard assumption of verifiability in contract theory
\parencite{bull_evidence_2004}.
\begin{axiom}[Proof Conjunction]
  $\succsim$ fulfills {\em proof combinations with respect to sensitive
    (reactive) events} if for all sensitive (reactive) events $E$
  and $F$, the event $E \cap F$ is a sensitive (reactive) event.
\end{axiom}
To be precise, sensitive events are events for which covers of
potentially many proofs exist. However, if both $E$ and $F$ have such
a cover of proofs, then we can form a cover for $E \cap F$ by
intersecting every verifiable event in the cover of $E$ with every
verifiable event in the cover of $F$.

\begin{example}
  Suppose the firm can verify both $\{s,t\}$ (by presenting purchase
  certificates, showing certificate supply was sufficient) and
  $\{t,u\}$ (via an academic study showing that purchasing RECs in
  this market does not cause energy-market demand shifts). Proof
  Conjunction requires that $\{s,t\} \cap \{t,u\} = \{t\}$ is also a
  sensitive event. The firm can present both proofs simultaneously
  when state $t$ is the true state: the certificate proof of $\{s,t\}$
  rules out state $u$, while the study proof of $\{t,u\}$ rules out
  state $s$. Together, they certify that the true state is precisely
  $\{t\}$, where {\sc RECs} achieves 100 megatons compared to 70 for
  {\sc Trees}. If state $t$ is sufficiently likely, this additional
  proof therefore makes {\sc RECs} the preferred choice over {\sc
    Trees}.
\end{example}

The final key axiom makes precise that the decision maker is an
expected utility maximizer under the condition that every consequence
is verifiable. We impose this via ambiguity neutrality from the
literature on (ex-ante) ambiguity attitudes.
\begin{axiom}[Proof Sufficiency]
  $\succsim$ fulfills {\em proof sufficiency with respect to sensitive
    (reactive) events} if for all sensitive (reactive) events $E$ and
  $F$, all events $A \subseteq E \cup F$, and all consequences $\gamma \succsim \beta$, $\gamma' \succsim \beta'$
  if $f = \gamma_{A \cap E}\beta \sim \gamma'_{A \cap F} \beta'=g$ then $1/2 f \oplus 1/2 g \sim f$.
\end{axiom}

This axiom states that if an event is covered by two sensitive events
$E$ and $F$, then the bets on $A \cap E$, i.e., act $f$ and bets on $A \cap
F$, i.e., act $g$, are ambiguity neutral, i.e., there is no motive for
hedging; the mixed act $1/2 f \oplus 1/2 g$ is indifferent to both $f$ and
$g$.

\begin{example}
  Again suppose both $E = \{s,t\}$ and $F = \{t,u\}$ are sensitive.
  Consider first $A = \{s,t,u\}$, so $f = \gamma_{\{s,t\}}\beta$ and $g =
  \gamma'_{\{t,u\}}\beta'$. The good outcomes of both acts fall on verifiable
  events: the certificate proof establishes $\gamma$ for $f$ whenever $s$
  or $t$ obtains, and the study proof establishes $\gamma'$ for $g$
  whenever $t$ or $u$ obtains. The consequences of both acts are fully
  verifiable in every state and are therefore evaluated like
  expected-utility acts and mixing these acts should provide no benefit.

  Now consider $A = \{s,u\}$, so $f = \gamma_{\{s\}}\beta$ and $g =
  \gamma'_{\{u\}}\beta'$: the good outcomes now fall on non-verifiable states and
  indifference of the two acts requires $\beta \sim \beta'$. Under $f$, the firm
  achieves $\gamma$ in state $s$, but the certificate proof of $\{s,t\}$ only
  certifies the worst outcome on $E$, which is $\beta$ in state $t$. The
  good outcome is not verifiable. The same applies to $g$ in state
  $u$: the study proof of $\{t,u\}$ certifies at most $\beta'$, again
  because state $t$ yields $\beta'$ under $g$. State $t$ is a bottleneck
  in both proofs. Mixing the two acts does not resolve this: state $t$
  still yields a bad outcome $1/2\beta \oplus 1/2\beta'$.
\end{example}

\subsection{Characterization of Expected Certification Utility}
The above stated axioms contain necessary and sufficient conditions on
preferences for an expected certification utility. More precisely, we
obtain the following characterization result:

\begin{theorem}[Representation Theorem]\label{thm:verification}
  The following statements are equivalent:
  \begin{enumerate}
  \item $\succsim$ is a biseparable preference fulfilling comonotonic
    independence, proof conjunction with respect to sensitive events, and
    proof sufficiency with respect to sensitive events.
  \item $\succsim$ is an expected certification utility.
  \end{enumerate}
\end{theorem}

The representation has standard uniqueness properties with respect to
$U$ and $u$. The interesting uniqueness properties are with respect to
the set of verifiable events and subjective beliefs $\mu$. Can we
identify from behavior which events the decision maker expects to be
able to verify? It turns out that while the set $\mathscr{V}$ is not
unique, its smallest elements are unique and once we close
$\mathscr{V}$ under unions, the resulting set is unique and so are the
beliefs over elements of this set and complements thereof:

\begin{proposition}[Uniqueness]\label{prop:uniqueness}
  Suppose $\succsim^1$ and $\succsim^2$ are expected certification utilities.
  Then $a \succsim^1 b \Leftrightarrow a \succsim^2 b$ for all $a,b \in \mathscr{A}$ if and only if:
  \begin{itemize}
  \item $U^1 = \theta U^2 + \phi$,
  \item $u^1 = \theta u^2 + \phi$,
  \item $cl_{\cup}(\mathscr{V}^1) = cl_{\cup}(\mathscr{V}^2)$,
  \item $\forall E \in cl_{\cup}(\mathscr{V}^1): \mu^1(E) = \mu^2(E)$ and
    $\mu^1(\overline{E}) = \mu^2(\overline{E})$.
  \end{itemize}
  for some $\theta \in \mathbb{R}_{+}$ and some $\phi \in \mathbb{R}$
\end{proposition}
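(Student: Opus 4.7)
My plan is to prove both directions by first reformulating the verification integral as a Choquet-style cumulative sum whose decision weights lie in $cl_{\cup}(\mathscr{V})$, and then using biseparable uniqueness together with a structural analysis of the capacity $\nu(A) := \mu\bigl(\bigcup\{E \in \mathscr{V}: E \subseteq A\}\bigr)$ to recover $cl_{\cup}(\mathscr{V})$ and $\mu$ on it.

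For sufficiency, I fix an act $a$ and order the relevant states as $s_1, \ldots, s_n$ with $u(a(s_1)) \geq \cdots \geq u(a(s_n))$, writing $A_k = \{s_1, \ldots, s_k\}$. Because $\mathscr{V}$ is closed under intersections, the inner $\max\min$ at a state $s$ is attained at the smallest $E \in \mathscr{V}$ containing $s$, so the super-level set $\{s : \max_{E \in \mathscr{V}: s \in E} \min_{t \in E} u(a(t)) \geq u(a(s_k))\}$ equals $\bigcup\{E \in \mathscr{V}: E \subseteq A_k\}$, which lies in $cl_{\cup}(\mathscr{V})$. A telescoping layer-cake decomposition of the integral then gives
\[
U(a) \;=\; u(a(s_n))\,\mu(\mathscr{S}^*) + \sum_{k=1}^{n-1}\bigl(u(a(s_k)) - u(a(s_{k+1}))\bigr)\,\mu\bigl(\bigcup\{E \in \mathscr{V}: E \subseteq A_k\}\bigr),
\]
in which every ingredient depends only on $u$, on $cl_{\cup}(\mathscr{V})$, and on $\mu$ restricted to it. Combined with the affine relationship $u^1 = \theta u^2 + \phi$, this yields $U^1 = \theta U^2 + \phi$ and hence $\succsim^1 = \succsim^2$.

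For necessity, Step~1 invokes the biseparable uniqueness results cited in the paper: the common preference relation pins down $u$ and $U$ up to a positive affine transformation, and evaluating the representation on binary acts $\gamma A \beta$ identifies $\nu(A)$ for every event $A$. Step~2 recovers the lattice $\mathscr{V}^* := cl_{\cup}(\mathscr{V})$ from $\nu$ through the preference-theoretic notion of min-increasing event: up to the empty set and preference-null events, the min-increasing events coincide with $\mathscr{V}^*$, because for $E \in \mathscr{V}^*$ any preference-nonnull $F \subsetneq E$ must meet some minimal verifiable subset of $E$ and thus force $\nu(E) > \nu(E \setminus F)$, while for $E \notin \mathscr{V}^*$ the gap $E \setminus E^*$ supports a nonnull $F$ with $\nu(E \setminus F) = \nu(E)$. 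Once $\mathscr{V}^*$ is identified, setting $\mu(E) := \nu(E)$ for $E \in \mathscr{V}^*$ is consistent because $E^* = E$ forces $\nu(E) = \mu(E)$; additivity of the probability measure $\mu$ then pins down $\mu(\overline{E}) = 1 - \mu(E)$.

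The main obstacle is Step~2. Because a state $s \in \mathscr{S}^*$ may have $\{s\}$ preference-null (it matters only through the $\min$ inside a larger verifiable event), the capacity $\nu$ can exhibit ``flat regions'' that a priori could arise either from $\mu$-null enlargements inside $\mathscr{V}^*$ or from genuine failures $E \notin \mathscr{V}^*$ with $E^* \subsetneq E$. Disentangling these two sources is what forces the recovery to proceed through the minimal verifiable events that cover each relevant state rather than through $\nu$ alone, and critical-event modularity should be exactly what guarantees that the identified min-increasing events close up under unions in the manner needed to reconstruct the full lattice $\mathscr{V}^*$.
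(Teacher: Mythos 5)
Your overall architecture matches the paper's: necessity runs through biseparable uniqueness for $u$ and $U$, identification of the capacity $\nu$ on binary acts, and recovery of $cl_{\cup}(\mathscr{V})$ from the behaviorally defined critical (min-increasing) events --- this is exactly the route the paper takes, where it asserts that the critical events of the representation are precisely the sets with positive M\"obius mass or a cover by such sets. Your sufficiency argument is in fact more explicit than the paper's (which dismisses that direction as trivial): the layer-cake decomposition correctly shows that every decision weight is $\mu$ evaluated at an element of $cl_{\cup}(\mathscr{V}) \cup \{\emptyset\}$, so the representation depends only on the data listed in the proposition. That direction is sound.

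The problem is that the necessity direction stops exactly at its crux. Your Step~2 asserts that for $E \notin cl_{\cup}(\mathscr{V})$ the gap $E \setminus E^*$ ``supports a nonnull $F$ with $\nu(E\setminus F)=\nu(E)$,'' and your own closing paragraph then correctly observes that this can fail: if $E\setminus E^*$ consists of relevant states whose singletons are null (states that matter only through the $\min$ inside a larger verifiable event), then every nonnull $F \subset E$ contains a positive-$\mu$-measure verifiable set, so $\nu(E\setminus F)<\nu(E)$ and $E$ is min-increasing even though $E \notin cl_{\cup}(\mathscr{V})$. Hence min-increasing events need not coincide with $cl_{\cup}(\mathscr{V})$, and the sentence ``critical-event modularity should be exactly what guarantees\ldots'' is a conjecture, not an argument; nothing in your text actually establishes that two representations of the same preference share the same union-closed lattice. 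Closing this requires the machinery the paper develops inside the proof of Theorem~\ref{thm:verification}: the M\"obius inverse $m^{\nu}$ of the behaviorally identified capacity is nonnegative (Lemma~\ref{lemm:positiveCritical}), the events with $m^{\nu}>0$ are critical and admit no cover by critical strict subsets, and each relevant state has a unique minimal such event $\phi(s)$ (Lemma~\ref{lemm:minimalPhi}); since these objects are computed from $\nu$ alone, they agree across the two representations, and that is what delivers $cl_{\cup}(\mathscr{V}^1)=cl_{\cup}(\mathscr{V}^2)$. Your diagnosis of the obstacle is sharper than the paper's own terse treatment, but the proof as written does not overcome it.
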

\begin{example}
  In our example, the verifiability of $\{s, t\}$ and $\{s, t, u\}$
  can be inferred from behavior of the decision maker and need not be
  exogenously given. Moreover, the subjective probabilities of $\{s,
  t\}$ and $\{u\}$ are uniquely determined from behavior.
\end{example}

Effectively, when preferences over acts are identical then if one
decision maker can prove that a certain event obtains, the other
decision maker is able to prove that this event obtains or even able
to prove that a strict subset of this event obtains. Thus, it could be
the case that decision maker 1 is in possession of proofs that $E, F,
E \cup F, E \cap F$ obtain (whenever they actually obtain) while decision
maker 2 is only able to prove that $E, F, E \cap F$ obtain. In terms of
behavior, we would not see a difference because the decision maker
will always want to provide the most precise proof possible. Thus, we
cannot identify each proof uniquely but only the collection of all
proofs by closure under unions, the sensitive events. Of course, this
corresponds to a unique minimal cover of the state space with
verifiable events.

\subsection{Characterization of Expected Obfuscation Utility}
We also obtain a dual theorem that characterizes obfuscation utility.
Since an expected obfuscation utility is equivalent to an expected
certification utility with a reverse order of preferences,
supermodularity is replaced by submodularity and proof sufficiency
holds with respect to reactive instead of
sensitive events.
\begin{theorem}[Representation Theorem]\label{thm:obfuscation}
  The following statements are equivalent:
  \begin{enumerate}
  \item $\succsim$ is a biseparable preference fulfilling comonotonic
    independence, proof conjunction with respect to reactive events, and
    proof sufficiency with respect to reactive events.
  \item $\succsim$ is an expected obfuscation utility.
  \end{enumerate}
\end{theorem}

Naturally, the expected obfuscation utility has the same uniqueness
properties as expected certification utility. Expected obfuscation
utility can be interpreted as the decision maker minimizing the
utility of a stakeholder with expected certification utility who has a
reverse order of preference of consequences. Similarly, expected
certification utility can be seen as minimizing the utility of a
stakeholder who wants to nitpick, i.e., who wants to point at the
possibility that a bad consequence may have been obtained by the
decision maker.

\section{Model Properties}
\label{sec:comparative}
In this section we formally derive the connection of verification and
obfuscation preferences to ambiguity attitudes and comparative statics
of the model.

\begin{definition}[Ambiguity Attitude]
  A decision maker is \emph{ambiguity averse} if for any two acts $f,
  g$ with $f \sim g$, we have $\frac{1}{2} f + \frac{1}{2} g \succsim f$. A
  decision maker is \emph{ambiguity seeking} if for any two acts $f,
  g$ with $f \sim g$, we have $\frac{1}{2} f + \frac{1}{2} g \precsim f$.
\end{definition}

We did not impose any preference for or against ambiguity in our
axioms but merely that for certain binary acts the decision maker is
ambiguity neutral. Nonetheless, it turns out that the axioms are
sufficiently strong to guarantee ambiguity averse behavior
for verification preferences and ambiguity seeking behavior for
obfuscation preferences:
\begin{corollary}[Ambiguity Attitude]\label{coro:ambiguityattitude}
  A decision maker with an expected verification (obfuscation) utility
  is ambiguity averse (seeking).
\end{corollary}
Thus, our expected certification utility model is a special case of
both Choquet expected utility with a supermodular capacity
\parencite{schmeidler_subjective_1989} and maxmin expected utility
\parencite{gilboa_maxmin_1989}.

For performing comparative statics, we first define comparative risk
preferences in the standard way. This will help us to contrast
comparative risk aversion with other comparative results.
\begin{definition}[Comparative Risk Preference]
	$\succsim^2$ is at least as risk averse as $\succsim^1$ if whenever $1/2 x \oplus 1/2 y \succsim^1
		z$, then $1/2 x \oplus 1/2 y \succsim^2 z$. Two decision makers are equally risk averse if
	each is at least as risk averse as the other.
\end{definition}
The following result is a standard result.
\begin{proposition}[Comparative Risk Aversion]\label{prop:risk}
	Suppose $\succsim^1$ and $\succsim^2$ are expected verification utilities or
	expected obfuscation utilities. Then the following statements are equivalent:
	\begin{enumerate}
		\item $\succsim^2$ is at least as risk averse as $\succsim^1$.
		\item $u^2 = t \circ u^1$ for some continuous, concave function $t$.
	\end{enumerate}
\end{proposition}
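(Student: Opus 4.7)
The plan is to reduce the comparative risk aversion axiom to a condition on the transformation $t$ linking $u^1$ and $u^2$, and then apply a Jensen-type argument on the interval $u^1(\mathscr{X})$.

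First, I would interpret preference averages at the utility level. Expanding the defining indifference $xEy \sim [{xEz}]E[{zEy}]$ under the biseparable representation gives, by a direct calculation, that $w = 1/2 x \oplus 1/2 y$ satisfies $u(w) = (u(x)+u(y))/2$ for any representation $u$ in Axiom \ref{axiom:biseparable}. The comparative risk aversion condition therefore translates into a purely utility-level statement relating $u^1$ and $u^2$. Specializing that statement with $x = y$ collapses the midpoint to $x$ and yields $x \succsim^1 z \Rightarrow x \succsim^2 z$ for all $x,z \in \mathscr{X}$, so $u^1$-indifferences are preserved by $\succsim^2$ and $u^2$ is a function of $u^1$. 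I can then define $t : u^1(\mathscr{X}) \to u^2(\mathscr{X})$ by $t(u^1(x)) := u^2(x)$; monotonicity of $t$ follows, and the biseparable axiom's convexity requirement on $U(\mathscr{X})$ ensures that the domain $u^1(\mathscr{X})$ is an interval in $\mathbb{R}$.

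For the ``if'' direction, assuming $u^2 = t \circ u^1$ with $t$ continuous and concave, the comparative risk aversion condition follows from Jensen's inequality combined with monotonicity of $t$: the midpoint $(u^1(x)+u^1(y))/2$ dominates $u^1(z)$ under the premise, monotonicity carries this into $t\bigl((u^1(x)+u^1(y))/2\bigr) \geq t(u^1(z))$, and concavity bounds $t\bigl((u^1(x)+u^1(y))/2\bigr)$ by $(t(u^1(x))+t(u^1(y)))/2$, producing the required utility inequality for $u^2$. For the ``only if'' direction, I would specialize the axiom by choosing $z$ so that $u^1(z) = (u^1(x)+u^1(y))/2$ (guaranteed to exist in $\mathscr{X}$ by the convex-range condition on $U$), making the premise hold with equality; the conclusion of the axiom then reads as a midpoint concavity inequality for $t$ on the interval $u^1(\mathscr{X})$.

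The main obstacle is upgrading midpoint concavity to full concavity together with continuity of $t$. Monotonicity of $t$ supplies measurability and continuity outside a countable set, so the classical result that a midpoint-concave measurable function on an interval is concave applies and yields concavity of $t$; continuity on the interior of $u^1(\mathscr{X})$ is then automatic from concavity on an interval, and monotonicity handles any attained boundary points. The cardinal uniqueness of $u^1$ and $u^2$ (up to positive affine transformations) established earlier means that the concave shape of $t$ is a property of the preferences themselves and not an artifact of representation choice.
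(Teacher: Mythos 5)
Your route is much more detailed than the paper's: the paper's entire ``proof'' is a remark that the equivalence is a standard comparative-risk-aversion result, and that the only point needing verification is that the preference average $1/2\,x \oplus 1/2\,y$ of constant acts is determined by $u$ alone and does not depend on $\mathscr{V}$, so the comparison is well posed across decision makers with different verifiable events (and across verification vs.\ obfuscation types). Your opening computation --- expanding $xEy \sim [{xEz}]E[{zEy}]$ under biseparability to get $u(1/2\,x\oplus 1/2\,y) = \tfrac{1}{2}(u(x)+u(y))$ --- delivers exactly that well-posedness point and then actually supplies the standard argument (well-definedness and monotonicity of $t$, midpoint inequality, Sierpi\'nski-type upgrade from midpoint to full concavity via measurability of a monotone function). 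That is a legitimate and more self-contained route than the paper's.

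However, there is a direction error at the crux. Take the paper's Definition of Comparative Risk Preference literally: $1/2\,x\oplus 1/2\,y \succsim^1 z$ implies $1/2\,x\oplus 1/2\,y \succsim^2 z$. Writing $a=u^1(x)$, $b=u^1(y)$ and choosing $z$ with $u^1(z)=(a+b)/2$, the conclusion reads $\tfrac{1}{2}\bigl(t(a)+t(b)\bigr) \geq t\bigl(\tfrac{a+b}{2}\bigr)$, which is midpoint \emph{convexity} of $t$, not concavity; your ``only if'' step mislabels it. Symmetrically, your ``if'' chain $\tfrac{1}{2}\bigl(t(a)+t(b)\bigr) \geq t\bigl(\tfrac{a+b}{2}\bigr) \geq t(u^1(z))$ is valid only under convexity of $t$ --- the inequality you attribute to concavity is the convexity inequality (concavity gives $t\bigl(\tfrac{a+b}{2}\bigr) \geq \tfrac{1}{2}(t(a)+t(b))$, which breaks the chain). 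So as written your argument proves ``statement 1 $\Leftrightarrow$ $t$ convex.'' To land on the proposition's ``$t$ concave'' you must use the standard acceptance-set definition ($z \succsim^1 1/2\,x\oplus 1/2\,y$ implies $z \succsim^2 1/2\,x\oplus 1/2\,y$), i.e., the paper's definition with the implication reversed; you should state explicitly which definition you are using and flag the discrepancy rather than silently relabeling the inequalities. A smaller point: ``monotonicity handles any attained boundary points'' is not quite right --- a nondecreasing, midpoint-concave $t$ can still jump down at an attained left endpoint of $u^1(\mathscr{X})$ (e.g., $t(0)=-1$, $t(r)=r$ for $r>0$), so continuity there needs a separate argument or a restriction to the interior of the range.
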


The following result shows that risk attitude is independent of the attitude
towards verifiable events. Thus, the concern for verifiability cannot be
captured by risk aversion and vice versa. We can see risk aversion as an
aversion to ex-ante uncertainty of perfectly verifiable consequences and preference
for verifiability as an aversion to ex-post uncertainty. Moreover, we only
require identical preferences over two consequences in order to compare two
decision makers with respect to their verifiable events.
\begin{proposition}[ Comparative Statics ] \label{prop:comparative}
	Suppose $\succsim^1$ and $\succsim^2$ are expected verification utilities
	with $\gamma \succ^1 \beta$ and $\gamma \succ^2 \beta$ and identical null events. Then the
	following statements are equivalent:
	\begin{enumerate}
		\item $cl_{\cup}(\mathscr{V}^1) \subseteq cl_{\cup}(\mathscr{V}^2)$.
		\item $\gamma E \beta \sim^2 \gamma (E - F) \beta$ implies $\gamma E \beta \sim^1 \gamma (E - F) \beta$.
	\end{enumerate}
\end{proposition}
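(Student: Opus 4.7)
The plan is to translate both directions into conditions on the set function $\sigma^i(E) := \bigcup\{V \in \mathscr{V}^i : V \subseteq E\}$, which is the largest element of $cl_{\cup}(\mathscr{V}^i)$ contained in $E$. Direct computation from the verification utility formula yields $U^i(\gamma E \beta) = \mu^i(\sigma^i(E))\, u(\gamma) + (1 - \mu^i(\sigma^i(E)))\, u(\beta)$, so under $\gamma \succ^i \beta$ the indifference $\gamma E \beta \sim^i \gamma(E-F) \beta$ is equivalent to $\mu^i(\sigma^i(E)) = \mu^i(\sigma^i(E-F))$; since $E - F \subseteq E$ forces $\sigma^i(E-F) \subseteq \sigma^i(E)$, this reduces to $\mu^i(\sigma^i(E) \setminus \sigma^i(E-F)) = 0$. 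I also exploit the pointwise characterization $s \in \sigma^i(E) \iff C^i_s \subseteq E$, where $C^i_s := \bigcap\{V \in \mathscr{V}^i : s \in V\} \in \mathscr{V}^i$ is the smallest $\mathscr{V}^i$-set containing $s$, well-defined by closure under intersections and finiteness of $\mathscr{S}$; hypothesis (1) forces $C^2_s \subseteq C^1_s$ for every $s$.

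For $(2) \Rightarrow (1)$ I argue the contrapositive. Suppose $cl_{\cup}(\mathscr{V}^1) \not\subseteq cl_{\cup}(\mathscr{V}^2)$; then some $V \in \mathscr{V}^1$ satisfies $\sigma^2(V) \subsetneq V$, and $V$ can be chosen nonnull in $\succsim^1$: otherwise every element of $\mathscr{V}^1 \setminus cl_{\cup}(\mathscr{V}^2)$ would be null in both preferences by identical null events, and invoking Proposition \ref{prop:uniqueness} these could be deleted from $\mathscr{V}^1$ without changing preferences, yielding $cl_{\cup}(\mathscr{V}^1) \subseteq cl_{\cup}(\mathscr{V}^2)$, a contradiction. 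Set $E := V$ and $F := V \setminus \sigma^2(V)$, so $E - F = \sigma^2(V)$ and $\sigma^2(E) = \sigma^2(V) = \sigma^2(E-F)$; hence $\gamma E \beta \sim^2 \gamma (E-F) \beta$. However $\sigma^1(E) = V$ whereas $\sigma^1(E-F) \subseteq \sigma^2(V) \subsetneq V$, and nonnullness of $V$ in $\succsim^1$ yields $\mu^1(\sigma^1(E)) > \mu^1(\sigma^1(E-F))$, contradicting (2).

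For $(1) \Rightarrow (2)$, assume $cl_{\cup}(\mathscr{V}^1) \subseteq cl_{\cup}(\mathscr{V}^2)$ (which yields $\sigma^1(E) \subseteq \sigma^2(E)$ for every $E$) and $\mu^2(\sigma^2(E) \setminus \sigma^2(E-F)) = 0$. The latter set is $\mu^2$-null, hence null in $\succsim^2$, and by identical null events null in $\succsim^1$, i.e., $\mu^1(\sigma^1(\sigma^2(E) \setminus \sigma^2(E-F))) = 0$. To conclude I split each $s \in \sigma^1(E) \setminus \sigma^1(E-F)$ according to whether $C^2_s \cap F = \emptyset$ or not: if not, $s$ lies directly in $\sigma^2(E) \setminus \sigma^2(E-F)$; if so, $s$ is in the region where $C^1_s$ strictly extends $C^2_s$ into $F$, and I would absorb these states into a $\succsim^1$-null set by applying the identical null events hypothesis to the $\mathscr{V}^2$-sets that tile $C^1_s$.

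The main obstacle is precisely this second case in $(1) \Rightarrow (2)$: the naive set inclusion $\sigma^1(E) \setminus \sigma^1(E-F) \subseteq \sigma^2(E) \setminus \sigma^2(E-F)$ fails in general, since an $s$ with $C^1_s \supsetneq C^2_s$ may have $C^1_s$ intersect $F$ while $C^2_s$ misses $F$ entirely. The identical null events hypothesis is the structural linkage between $\mu^1$ and $\mu^2$ that must be used to absorb this residue into a $\mu^1$-null set, rather than settling the inclusion set-theoretically.
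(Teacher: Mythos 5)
Your reduction of both statements to the set function $\sigma^i$ and the identity $U^i(\gamma E\beta)=\mu^i(\sigma^i(E))\,u^i(\gamma)+(1-\mu^i(\sigma^i(E)))\,u^i(\beta)$ is correct, and it is a more explicit route than the paper's, which argues entirely at the level of critical (min-increasing) events and their identification with unions of positive M\"obius-mass sets. But the proposal is not a proof: you concede that $(1)\Rightarrow(2)$ is unfinished, and the device you sketch for finishing it does not deliver. The chain you rely on is: $D:=\sigma^2(E)\setminus\sigma^2(E-F)$ is $\mu^2$-null, hence behaviorally null for $\succsim^2$, hence behaviorally null for $\succsim^1$, hence\ldots\ $\mu^1(\sigma^1(D))=0$. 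You write this last equivalence correctly yourself, and that is exactly the problem: behavioral nullness of a set $G$ under $\succsim^1$ controls $\mu^1(\sigma^1(G))$, not $\mu^1(G)$. What you must show is that $\mu^1$ of the actual set $R:=\sigma^1(E)\setminus\sigma^1(E-F)$ vanishes, and the chain fails to reach this for two independent reasons: (a) your ``case B'' states lie in $R$ but not in $D$, so the transferred nullness does not even apply to them; and (b) even for the case-A states, $R\cap D\not\subseteq\sigma^1(D)$ in general (a state $s\in R\cap D$ has $C^1_s$ reaching into $F$, so $C^1_s\not\subseteq D$ is perfectly possible), so $\mu^1(\sigma^1(D))=0$ places no bound on $\mu^1(R\cap D)$. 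The missing transfer from $\mu^2$-nullness to $\mu^1$-nullness of non-$\mathscr{V}^1$-measurable sets is the entire content of the implication, and ``identical null events'' in its behavioral form is a weaker hypothesis than you are treating it as.

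Two further soft spots. In $(2)\Rightarrow(1)$, the contradiction at the end requires $\mu^1\bigl(V\setminus\sigma^1(\sigma^2(V))\bigr)>0$, whereas nonnullness of $V$ only gives $\mu^1(V)>0$; and the argument that such a nonnull $V$ can be chosen --- deleting null elements of $\mathscr{V}^1$ and invoking Proposition \ref{prop:uniqueness} --- does not contradict the hypothesis, since statement (1) is about the representations you were handed, not about a pruned one (and pruning may also break closure under intersections). Finally, note that statement (2), read literally over all $F$, is sensitive to null $F$: the indifference $\gamma E\beta\sim^2\gamma(E-F)\beta$ can then hold for reasons unrelated to verifiability (a $\mu^2$-null but $\mathscr{V}^2$-verifiable piece being removed), which is precisely the configuration your case B describes. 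The paper's own proof sidesteps this by working with min-increasing events, whose definition quantifies only over nonnull $F\subset E$; your argument needs either to import that restriction explicitly or to prove that the identical-null-events hypothesis excludes such configurations. As it stands it does neither, so the proposal should be graded as containing a genuine gap at the central step.
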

The result shows why sensitive events are at the core of preference for
verifiability. If one decision maker's set of sensitive events is a
subset of the other decision maker's, then it is also necessarily the
case that their set of verifiable events is a subset.

In the context of our example, we may care about how inefficient a decision
maker's choice is when following an expected certification utility or an
expected obfuscation utility. A plausible benchmark is expected utility
maximization because the agent is probabilistically sophisticated over
verifiable events.
\begin{definition}[Welfare Loss]
  Let $\succsim$ be an expected verification (obfuscation) utility with
  representation $U,u,\mu,\mathscr{V}$. Let $A \subset \mathscr{A}$ be a
  finite set over which $\succsim$ is strict. The welfare loss is defined as
  \begin{align}
    \label{eq:Distortion}
    L_{\succsim,\mu}(A) =
    \max_{a \in A} \int u \circ a d\mu
    - \max_{b \in \arg \max_{a \in A} U} \int u \circ b d \mu
  \end{align}
\end{definition}

It is noteworthy that from observing expected verification preferences
alone we cannot uniquely determine the welfare loss because the
probabilities of non-verifiable events are not uniquely determined
from behavior. If some consequences are not perfectly verifiable, we
need to complement the preference information with probability
information of non-verifiable events.

Given a fixed probability distribution over states it may be expected
that increases in transparency generally decrease the welfare loss and
align behavior closer with expected utility maximization or that it is
more desirable for a decision maker to be certification-seeking than
obfuscation-seeking. However, policy interventions that change the set
of verifiable events or that change firms' behavior from
obfuscation-seeking to certification-seeking turn out to not
necessarily reduce the welfare loss given a fixed $\mu$. The following
examples show that (holding fixed the verifiable events) a decision
maker acting according to expected certification utility may indeed
face a larger welfare loss than a decision maker acting according to
expected obfuscation utility and that increasing the set of verifiable
events need not lead to a welfare improvement. This is noteworthy
because this implies that calls for better verifiability of CO2
reduction and/or incentives against greenwashing need to be carefully
evaluated for whether they will achieve the desired purpose.

\begin{corollary}[Verification and Obfuscation Welfare Loss
  Comparison]\label{coro:loss}
  Let $\succsim$ be an expected certification utility and $\succsim^{\dagger}$ an
  expected obfuscation utility. Let the risk preferences of $\succsim$ and
  $\succsim^{\dagger}$ be identical. Then either one, but not both, of the
  following statements is true:
  \begin{enumerate}
  \item For all $\mu$ and all decision problems $A$,
    $L_{\succsim,\mu}(A)=L_{\succsim^{\dagger},\mu}(A)=0$.
  \item There exist $\mu$ and decision problems $A$ and $A^{\dagger}$ such
    that $L_{\succsim,\mu}(A) > L_{\succsim^{\dagger},\mu}(A)$ and
    $L_{\succsim^{\dagger},\mu}(A^{\dagger}) > L_{\succsim,\mu}(A^{\dagger})$.

  \end{enumerate}
\end{corollary}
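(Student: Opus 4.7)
The plan is to verify that the two cases form a mutually exclusive and jointly exhaustive dichotomy. Mutual exclusivity is immediate: case 1 asserts that every welfare loss vanishes, which rules out the strict positive losses demanded by case 2. The real content is exhaustiveness, which I would attack by first isolating a structural condition on $\mathscr{V}$ under which both models coincide with subjective expected utility.

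Next I would observe that the verification representation $\int_{s \in \mathscr{S}^{*}} \max_{E \in \mathscr{V}: s \in E} \min_{t \in E} u(a(t)) \, d\mu$ collapses to $\int u \circ a \, d\mu$ on every act precisely when $\{s\} \in \mathscr{V}$ for every $s \in \mathscr{S}^{*}$; in that case each singleton attains the outer maximum because the inner minimum reduces to $u(a(s))$, which dominates any larger $E$. The identical condition characterizes when the obfuscation representation equals expected utility. When it holds, both $\succsim$ and $\succsim^{\dagger}$ always select EU-optimal acts, putting us in case 1.

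If the condition fails, fix $s_0 \in \mathscr{S}^{*}$ with $\{s_0\} \notin \mathscr{V}$; closure under intersections and finiteness of $\mathscr{S}$ yield the smallest $E_0 \in \mathscr{V}$ containing $s_0$, and pick some $t_0 \in E_0 \setminus \{s_0\}$. By minimality, every $E' \in \mathscr{V}$ containing $s_0$ also contains $t_0$. Using the convexity of $U(\mathscr{X})$ from Axiom \ref{axiom:biseparable}, choose consequences $\underline{x} \prec m \prec \bar{x}$ with $u(\bar{x}) - u(m) = u(m) - u(\underline{x})$. Build $A = \{a_1, a_2\}$ where $a_1$ assigns $\bar{x}$ on $s_0$, $\underline{x}$ on $t_0$, and $m$ elsewhere, and $a_2 \equiv m$. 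For any $\mu$ with $\mu(s_0) > \mu(t_0)$, the centered choice of $m$ gives $\int u \circ a_1 \, d\mu > \int u \circ a_2 \, d\mu$. Yet at state $s_0$ the verification utility is at most $u(\underline{x})$ because every qualifying $E'$ contains $t_0$, so $\succsim$ strictly prefers $a_2$; meanwhile at $s_0$ the obfuscation utility is at least $u(\bar{x})$, so $\succsim^{\dagger}$ picks $a_1$. This yields $L_{\succsim,\mu}(A) > 0 = L_{\succsim^{\dagger},\mu}(A)$. A symmetric decision problem $A^{\dagger} = \{b_1, b_2\}$, obtained by swapping $\bar{x}$ and $\underline{x}$ on $\{s_0, t_0\}$, makes the constant act $b_2 \equiv m$ strictly EU-optimal yet over-valued by obfuscation through the outer $\max$ over $E_0$, producing $L_{\succsim^{\dagger},\mu}(A^{\dagger}) > 0 = L_{\succsim,\mu}(A^{\dagger})$ with the \emph{same} measure $\mu$.

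The main obstacle is controlling the representations on the remaining states of $E_0$ and on states whose smallest $\mathscr{V}$-covering event is not $E_0$, so that the welfare-loss comparison is actually driven by $\mu$ on $\{s_0, t_0\}$. Minimality of $E_0$, the agreement of $a_1$ and $a_2$ (respectively $b_1$ and $b_2$) outside $E_0$, and—if needed—setting $a_1, b_1$ equal to $m$ on any intermediate states of $E_0$, reduce the verification of the two strict inequalities to a routine calculation at $s_0$ and $t_0$.
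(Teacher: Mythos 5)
Your argument is correct, and its core device is the same one the paper uses: locate a non-verifiable event and build acts whose expected-utility ranking is driven by probability mass on non-verifiable subevents, while the verification (resp.\ obfuscation) ranking is driven by the worst (resp.\ best) outcome on the covering verifiable event, so that the EU benchmark agrees with exactly one of the two decision makers. The executions differ, though. The paper's proof is a sketch: it takes an event $E$ with no verifiable proper subset, disjoint subevents $F,G \subset E$, and acts $\gamma F \beta G \delta$ versus $\gamma G \beta F \delta'$ with $\delta,\delta'$ close in utility, so that EU discriminates via $\mu(F)$ versus $\mu(G)$ while the verification and obfuscation maximizers discriminate via $\delta$ versus $\delta'$; it never addresses the ``either one, but not both'' structure. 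You instead work with a non-verifiable singleton $\{s_0\}$, its minimal verifiable cover $E_0$ (whose existence you correctly extract from closure under intersections and finiteness, so that every verifiable event containing $s_0$ contains $t_0$), and a utility-symmetric spread $\underline{x}\prec m\prec\bar{x}$ around a constant act. This buys you two things the paper omits: an explicit characterization of when statement 1 holds (every singleton of $\mathscr{S}^*$ verifiable), which is what actually delivers exhaustiveness of the dichotomy, and clean control of the representation off $\{s_0,t_0\}$, since each state receiving $m$ contributes at most $u(m)$ under the max--min and at least $u(m)$ under the min--max, so both strict inequalities reduce to the comparison $\mu(\{s_0\})>\mu(\{t_0\})>0$ on the two distinguished states. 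The only step I would ask you to write out fully is that single computation, confirming that the obfuscation maximizer agrees with EU on $A$ and the verification maximizer agrees with EU on $A^{\dagger}$ under the same $\mu$; as sketched, it goes through.
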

Even though we may intuitively find that behavior according to
expected certification utility is normatively more appealing than that
of expected obfuscation utility, the expected utility loss may be
larger or smaller, depending on the available actions.

\begin{example}
	Suppose in the example decision problem states $s$ and $u$ are very
	unlikely. In this case, an expected utility maximizer would choose
		{\sc RECs}, the same action as a maximizer of expected obfuscation
	utility. A maximizer of expected certification utility would choose
		{\sc Trees} and incur a welfare loss. This is because the decision
	maker fears being blamed for an ex-post suboptimal action that was
	ex-ante optimal.
\end{example}

\begin{definition}[Comparative Loss from Intransparency]
  Let $\succsim$ and $\succsim'$ be expected verification (obfuscation) utility with
  identical risk preferences but different sets of verifiable events
  $\mathscr{V} \subseteq \mathscr{V}'$. Then the welfare loss due to
  intransparency is defined as:
  \begin{align}
    \label{eq:TransparencyLoss}
    T_{\succsim,\mu}(A, \mathscr{V}') = L_{\succsim,\mu}(A) - L_{\succsim',\mu}(A)
  \end{align}
  where $\succsim'$ is the expected verification (obfuscation) utility with
  representation $U,u,\mu,\mathscr{V}'$.
\end{definition}

The following result is obvious, but highlighted for its policy
relevance:
\begin{corollary}[Welfare Loss Compared to Perfect Information]\label{coro:perfect}
  The welfare loss compared to perfect information is always
  nonnegative, $T_{\succsim,\mu}(A, 2^{\mathscr{S}}) \geq 0$.
\end{corollary}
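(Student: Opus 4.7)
The plan is to unpack the definition of $T$ when $\mathscr{V}' = 2^{\mathscr{S}}$ and to verify two things: that the ``primed'' preference in that case coincides with expected $u$-utility under $\mu$, and that the generic welfare loss $L_{\succsim,\mu}(A)$ is itself nonnegative for any verification or obfuscation utility.

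First, I would observe that when $\mathscr{V}' = 2^{\mathscr{S}}$, for each state $s \in \mathscr{S}^*$ the singleton $\{s\}$ lies in $\mathscr{V}'$ and is the unique $\subseteq$-minimal element of $\{E \in \mathscr{V}': s \in E\}$. In the verification representation this makes $\max_{E \in \mathscr{V}': s \in E} \min_{t \in E} u(a(t)) = u(a(s))$, and symmetrically $\min_{E \in \mathscr{V}': s \in E} \max_{t \in E} u(a(t)) = u(a(s))$ in the obfuscation representation. Hence $U'(a) = \int_{\mathscr{S}^*} u \circ a \, d\mu$, which, since by definition irrelevant states contribute nothing to preferences, coincides with $\int u \circ a \, d\mu$ for the purpose of comparing acts. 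Therefore $\arg\max_{a \in A} U' = \arg\max_{a \in A} \int u \circ a \, d\mu$, and so
\begin{equation*}
L_{\succsim',\mu}(A) \;=\; \max_{a \in A} \int u \circ a \, d\mu - \max_{b \in \arg\max_{a \in A} \int u \circ a \, d\mu} \int u \circ b \, d\mu \;=\; 0.
\end{equation*}

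Next I would check that $L_{\succsim,\mu}(A) \geq 0$ in general. This is immediate from its definition: the second maximum is taken over the subset $\arg\max_{a \in A} U \subseteq A$, so it cannot exceed the first maximum, which ranges over all of $A$. Substituting both facts into the definition of $T$ gives
\begin{equation*}
T_{\succsim,\mu}(A, 2^{\mathscr{S}}) \;=\; L_{\succsim,\mu}(A) - L_{\succsim',\mu}(A) \;=\; L_{\succsim,\mu}(A) \;\geq\; 0,
\end{equation*}
which is the claim. There is no real obstacle here; the only point requiring care is handling the restriction to $\mathscr{S}^*$ in the integrals (so that the full power set $2^{\mathscr{S}}$ indeed induces the full expected utility criterion, notwithstanding irrelevant states), and checking that $\mathscr{V} \subseteq 2^{\mathscr{S}}$ so that the definition of $T$ with $\mathscr{V}' = 2^{\mathscr{S}}$ is legitimately applicable.
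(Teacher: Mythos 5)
Your proof is correct and follows the same route as the paper's (one-sentence) argument: the paper simply notes that the expected utility of the $U$-maximizer can be at most the maximum expected utility on the decision problem, i.e.\ $L_{\succsim,\mu}(A)\geq 0$, with $L_{\succsim',\mu}(A)=0$ for $\mathscr{V}'=2^{\mathscr{S}}$ left implicit. You have merely spelled out the (correct) verification that full verifiability collapses both representations to expected utility, which the paper takes for granted.
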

It follows that a policy maker who could influence the set of
verifiable events would always want to implement ex-post certainty
about the consequences. The question is whether more generally any
information increase reduces the welfare loss. Again, it is possible
to find cases in which an information increase may increase the
welfare loss.
\begin{corollary}[Indeterminacy of Welfare Loss] \label{coro:indeterminacy}
  For all $cl_{\cup}\mathscr{V} \subset cl_{\cup}\mathscr{V}' \neq 2^{\mathscr{S}}$, there exist
  decision problems $A$, $A'$ such that
  $T_{\succsim,\mu}(A,\mathscr{V}')<0<T_{\succsim,\mu}(A',\mathscr{V}')$
\end{corollary}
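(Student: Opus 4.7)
My plan is to prove the existence of the two decision problems by direct construction, exploiting the structural gap between $cl_{\cup}(\mathscr{V})$ and $cl_{\cup}(\mathscr{V}')$ together with the residual non-verifiability implied by $cl_{\cup}(\mathscr{V}') \neq 2^{\mathscr{S}}$. The first ingredient is a witness event $E_{*} \in cl_{\cup}(\mathscr{V}') \setminus cl_{\cup}(\mathscr{V})$, which by hypothesis exists. The second ingredient is a pair of non-irrelevant states $s_0, s_1$ not separated by any element of $\mathscr{V}'$; such a pair must exist because, if $\mathscr{V}'$ separated every pair of states in $\mathscr{S}^{*}$, then closure of $\mathscr{V}'$ under intersection would force every singleton of $\mathscr{S}^{*}$ into $\mathscr{V}'$ and hence $cl_{\cup}(\mathscr{V}') = 2^{\mathscr{S}}$. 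These two witnesses drive the two constructions.

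For the direction $0 < T_{\succsim,\mu}(A',\mathscr{V}')$, I would take $A' = \{a,b\}$ where $a$ is a constant act at a moderate consequence $c$ and $b$ pays $\gamma \succ c$ on states in $E_{*}$ and $\beta \prec c$ elsewhere. Under $\mathscr{V}$, no verifiable event is contained in $E_{*}$, so for every state in $E_{*}$ the tightest $\mathscr{V}$-cover meets $\overline{E_{*}}$ and the relevant minimum for $b$ equals $u(\beta)$; thus $U(b) = u(\beta) < u(c) = U(a)$ and the decision maker chooses $a$. Under $\mathscr{V}'$, the event $E_{*}$ tightens the cover for states in $E_{*}$, raising the verification utility of $b$; placing sufficient mass of $\mu$ on $E_{*}$ makes $b$ both $\succsim'$-optimal and EU-optimal, so $L_{\succsim,\mu}(A') > 0 = L_{\succsim',\mu}(A')$ and $T > 0$.

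For the direction $T_{\succsim,\mu}(A,\mathscr{V}') < 0$, the construction is reversed so that the enlargement makes the decision maker switch from an EU-optimal act to an EU-inferior one. I would design $a$ to concentrate its good consequence $\gamma$ on the non-separable witness pair $\{s_0,s_1\}$, and $b$ to pay $\gamma$ on $E_{*} \setminus \{s_0,s_1\}$ and $\beta$ on $\{s_0,s_1\}$ and on $\overline{E_{*}}$. Under $\mathscr{V}$, the coarser covers prevent $b$'s high consequences from being credited on any non-trivial weight, so $a$ is chosen. Under $\mathscr{V}'$, the newly verifiable $E_{*}$ (or some minimal $\mathscr{V}'$-cover of $E_{*} \setminus \{s_0,s_1\}$) credits $b$'s high consequences and $b$ is chosen. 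By concentrating $\mu$ on $\{s_0,s_1\}$, which is insulated from the verification gain because $s_0,s_1$ are never separated in $\mathscr{V}'$, we ensure EU$(a) > $ EU$(b)$, giving $L_{\succsim',\mu}(A) > 0 = L_{\succsim,\mu}(A)$ and $T < 0$.

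The main obstacle is to make these constructions robust to arbitrary $\pi$-system structures of $\mathscr{V}$ and $\mathscr{V}'$. The delicate point is the second construction: one must locate, for the given $E_{*}$, a non-empty subset $E_{*} \setminus \{s_0,s_1\}$ whose tightest $\mathscr{V}'$-cover sits inside $E_{*}$ while its tightest $\mathscr{V}$-cover does not, so that adding verifiable events strictly raises $U'(b)$ without raising $U(b)$. The existence of such a subset follows from $E_{*} \notin cl_{\cup}(\mathscr{V})$ combined with the pair-separation argument above. Once these structural ingredients are in place, the welfare inequalities reduce to tuning $u(\gamma), u(c), u(\beta)$ and the mass $\mu$ assigns to $E_{*}$ versus $\{s_0, s_1\}$ so that the $\succsim$- and $\succsim'$-rankings flip in the required directions, and strictness of $\succsim$ on $A,A'$ can be enforced by small perturbations.
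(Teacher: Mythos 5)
Your proposal takes essentially the same route as the paper's own proof: the direction $0<T_{\succsim,\mu}(A',\mathscr{V}')$ is the ``trivial'' one where the newly verifiable event lets the decision maker recover the expected-utility-optimal act, and the direction $T_{\succsim,\mu}(A,\mathscr{V}')<0$ reproduces the paper's main-text example, using a new event $E_{*}\in cl_{\cup}(\mathscr{V}')\setminus cl_{\cup}(\mathscr{V})$ together with residual non-verifiability (your non-separated pair) to make the better-informed decision maker switch to an EU-inferior act; your sketch is in fact more explicit than the paper's, which merely asserts the constructions are straightforward. One small overstatement: $E_{*}\notin cl_{\cup}(\mathscr{V})$ guarantees only that \emph{some} state of $E_{*}$ has its tightest $\mathscr{V}$-cover escaping $E_{*}$, not every state, but concentrating $\mu$ on such a state suffices for your first construction.
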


\begin{example}
  In our example, suppose $\mathscr{V}' = \{\{s\}, \{s,t,u\}\}$ and
  $\mathscr{V} = \{\{s,t,u\}\}$. Let $A$ contain only the act {\sc
    Trees} and {\sc RECs} with a slight payoff increase of {\sc RECs}
  in state $u$. Suppose the beliefs are such that $\mu(\{t\})=.99$,
  meaning that an expected utility maximizer would strictly prefer
  {\sc RECs}. Under verifiable events $\mathscr{V}$, the decision
  maker would prefer {\sc RECs} and there is zero welfare loss. In
  $\mathscr{V}'$ the decision maker is able to verify that event
  $\{s\}$ obtains, and now prefers {\sc Trees} over {\sc RECs} because
  in state $s$ the act {\sc Trees} yields a higher payoff. However,
  this leads to a strictly greater welfare loss despite providing
  more information. Together, this and the previous example show that
  policy decisions regarding the incentives for transparency and
  efficiency of carbon emission reduction are nontrivial. While
  transparency and verifiability are perhaps intrinsically desirable,
  they may come at the cost of firms choosing less efficient CO2
  reduction strategies, see \textcite{naef_carbon_2025} for a study on
  afforestation. This suggests that policies that are aimed at
  increasing ex-post verifiability need to be carefully examined
  whether the gain in transparency might be offset by a loss in efficiency ---
  unless, ideally, the policy implements full ex-post transparency.
\end{example}

From these ex-ante counterintuitive results it follows that policies
aimed at increasing but not fully implementing transparency or
inducing certification seeking behavior need to take into account the
available actions of the decision makers. This result already holds
without addressing limiting factors of this welfare analysis; in
general, the decision maker's preferences may not perfectly align with
society's and costs of different policies are not explicitly modeled.

\section{Relation to Literature}\label{sec:literature}

Our model is a special case of Choquet expected utility
\parencite{schmeidler_subjective_1989}. While CEU has lost some of its
popularity as a model of {\em ex-ante} ambiguity aversion,
our interpretation as a preference for verifiability provides a clear
normative motivation for it in the context of {\em ex-post
  uncertainty} and identifies a choice domain in which it is the
natural representation. In particular, as discussed in the axioms
section, comonotonic independence seems a natural starting point for
other, less extreme versions of verifiability preferences than ours.

If the set of verifiable events is exogenously given, a
characterization using cominimum additivity
\parencite{kajii_cominimum_2007,kajii_coextrema_2009} instead of
comonotonic independence is also possible. Cominimum additivity with
respect to a set of events $\mathscr{V}$ requires that all acts that
agree on the worst states in all of the events in $\mathscr{V}$ have
to fulfill the independence axiom. Cominimum additivity provides
intermediate cases between full independence and comonotone
additivity.

Closely related to the representation of the present paper is the dual
self model of ambiguity \parencite{chandrasekher_dual_2022}. In this
model, decision makers maximize an objective $U(a) = \max_{P \in
  \mathscr{P}} \min_{p \in P} \int u \circ a d p$ where $\mathscr{P}$ is a set
of sets of priors. The interpretation of the model is that there are
two selves, a pessimistic one and an optimistic one. The pessimistic
one evaluates the acts according to the maxmin expected utility model
\parencite{gilboa_maxmin_1989} by choosing the worst possible prior.
However, the optimistic self determines the set of available priors
that the pessimist can choose from. Our two main models are special
cases of the dual self model. In a dual self interpretation of our
model, the maximizer chooses a minimal cover of verifiable events.
Each such cover corresponds to a set of priors via the restriction
that a prior $p$ needs to fulfill $p(E)=\mu(E)$ for all events $E$ in
the cover. In other words, to evaluate an act, the minimizer can
choose to assign the entire probability of a verifiable event to the
worst state of the verifiable event. Relative to the characterization
of \parencite{chandrasekher_dual_2022}, our model provides conditions
under which the maximization and minimization steps can be written
inside the expectation.

Preference for verifiability is a form of information preference since
the decision maker cares about how much information is ex post
available about which consequence has been achieved. Information
preferences are often modeled using a two stage approach
\parencite{kreps_temporal_1978,segal_two-stage_1990,dillenberger_additive-belief-based-preferences_2019}.
Given our decision model, the first stage would correspond to the
verifiable events and the second stage to the final consequences and
only the first stage is observed. If verifiable events $\mathscr{V}$
are objectively given and both ex-ante and ex-post probability
distributions over consequences are objective, it is possible to
perform our analysis using preferences over $\Delta \mathscr{V} \times \Delta
\mathscr{X}^{\mathscr{V}}$. However, this way of modeling would
severely restrict the explanatory power of our model. It does not
allow to infer the verifiable events from behavior but requires the
analyst to have data on the ex-post beliefs of the decision maker. In
the context of our example, this would defeat the main purpose of the
paper since ex-post the stakeholder and decision maker would agree on
the probability distribution of the CO2 emissions.

The present paper also provides a decision theoretic foundation for
the standard notion of verifiability employed in contract theory
starting with \textcite{bull_evidence_2004}: verifiable events are
closed under intersection but not necessarily under relative
complements. Our application shows that this notion can be
productively used to model greenwashing.

Our research also relates to definitions
\parencite{defreitas_concepts_2020} and formal models of greenwashing
\parencite{wu_bad_2020} and of green products
\parencite{groening_green_2018}. Unlike previous modeling attempts,
the present paper provides a purely behavioral definition of
greenwashing versus certification-seeking behavior. We do not require
unobservable model components such as incorrect consumer beliefs due
to deception or detailed information about the interaction between
consumers and firms. This yields a very parsimonious model of
greenwashing which does not require any information about the behavior
of consumers of greenwashed products -- preference data of the firm
over ``green'' policies fully identifies the model.

\section{Discussion}
When consequences are not directly observable to decision makers, it
is plausible that these decision makers deviate from expected utility.
We provide a starting point for the analysis of such deviations. As
our axiomatic analysis shows, unobservable consequences can
rationalize a form of ambiguity preferences. Specifically, seemingly
ambiguity averse behavior can be induced from the desire to be able to
prove that good consequences have been reached and seemingly ambiguity
loving behavior can be induced from obfuscation seeking behavior.
These models only provide extreme starting points for a more
general analysis of ex-post verifiability within the CEU framework.

Our reinterpretation of CEU opens a wide field of applications for CEU
when verifiability of consequences matters, such as in environmental
economics. Starting with \textcite{unfccc_race_2023}, there have been
attempts to increasingly rely on CCR to achieve CO2 emission
reductions. These attempts have been criticised as ``inappropriately
verified'' \parencite[][, p.6]{new_corporate_2023}. Our comparative
statics results suggest that ---unless perfect ex-post verifiability
is implemented--- even in our simplistic model application the welfare
effect of policies aimed at increasing verifiability of consequences
are indeterminate. In such cases, traditional environmental policies
(taxation of fossil fuels, energy efficiency mandates, etc.) that
directly incentivize actions rather than consequences avoid this
indeterminacy altogether, since their welfare effects do not depend on
the structure of verifiable events.

\section*{Acknowledgements}
The author gratefully acknowledges support from the Japan Society for
the Promotion of Science (JSPS) under grant number 25K05000.
\newpage

\begin{appendices}
	\renewcommand{\theequation}{\thesection.\arabic{equation}}
\section{Proof of Corollary \ref{coro:ceu}}
We prove sufficiency. By comonotonic independence and the biseparable
utility representation we obtain \parencite[see Proposition 2
of][]{ghirardato_subjective_2003} a representation $U(a) = \int (u \circ a)
d\mu$ where $\mu$ is a capacity and the integral is in the sense of
Choquet.

  Let $m^{\mu}$ be the M\"{o}bius inverse of $\mu$, i.e., $m^{\mu}(E) =
  \sum_{A \subseteq E} (-1)^{|A|-1} \mu(A)$. By Proposition 2 of
  \textcite{chateauneuf_characterizations_1989}, the Choquet integral
  can be
  expressed in terms of the M\"{o}bius inverse \parencite[for a
  comprehensive overview, see ][, p. 235]{grabisch_set_2016}:
	\begin{align}
		\label{eq:ChoquetMobius}
		\int (u \circ a) d\mu = \sum_{E \in \mathscr{E}}m^{\mu}(E)\min_{s \in E} u(a(s)).
	\end{align}
  such that $m^{\mu}(\emptyset)=0$, $\sum_{E \in \mathscr{E}} m^{\mu}(E) = 1$, and
  all partial sums across events contained in an arbitrary event are
  nonnegative.

  Now, define $\mathscr{E}^+ = \{E \in \mathscr{E}| m^{\mu}(E) > 0\}$ and
  $\mathscr{E}^- = \{E \in \mathscr{E}| m^{\mu}(E) < 0\}$. Since $U / x$
  represents $\succsim$ if and only if $U$ represents it, we are allowed to
  divide the representation by an arbitrary factor, in particular by
  $x = \sum_{E \in \mathscr{E}^+} m^{\mu}(E) - \sum_{E \in
    \mathscr{E}^-}m^{\mu}(E)$. Define $p = \sum_{E \in
    \mathscr{E}^+}m^{\mu}(E)/x$ as the probability with which the agent
  will be in a verification requiring situation where the worst
  consequence matters. Further, let $p^+(E) = m^{\mu}(E)/(x \cdot p)$ be the
  conditional probability over what is ex-post verifiable given that
  only the worst consequence matters. Correspondingly, define $p^-(E)
  = - m^{\mu}(E)/(x \cdot (1-p))$ as the conditional probability over what is
  verifiable given that only the best consequence matters. Inserting
  these definitions into \eqref{eq:ChoquetMobius} (divided by $x$) and
  rearranging terms yields the desired result.
\section{Proof of Theorem \ref{thm:verification}}
\begin{proof}
  We prove sufficiency.
  Via the proof of Corollary \ref{coro:ceu} we have a representation
  by a Choquet integral:
	\begin{align}
		\label{eq:ChoquetMobius}
		\int (u \circ a) d\mu = \sum_{E \in \mathscr{E}}m^{\mu}(E)\min_{s \in E} u(a(s)).
	\end{align}

  We first prove that ambiguity neutrality above sensitive events
  implies modularity of $\mu$ above sensitive events.

  \begin{definition}[Modularity above sensitive events]
    A capacity $\mu$ is {\em modular above sensitive events} if for all
    sensitive events $E$ and $F$, and all events $A \subseteq E \cup F$:
	\begin{align}
      \mu(A \cap E \cap F) + \mu(A)
    =
      \mu(A \cap E) +  \mu(A \cap F).
	\end{align}
  \end{definition}

  \begin{lemma}[Modularity]\label{lemm:modularity}
    If $\succsim$ with representation $U$ fulfills proof sufficiency,
    then $\mu$ is modular above sensitive events. In particular,
    sensitive events are closed under unions.
  \end{lemma}
  \begin{proof}
    Without loss of generality, assume $u(\gamma)-u(\beta) = 1 = u(\gamma') - u(\beta')$
    and $f = \gamma A \cap E \beta \sim \gamma' A \cap F \beta' = g$. It follows that
    \begin{align}
      U( 1/2 f \oplus 1/2 g) =
      & 1/2 \mu(A \cap E \cap F) + 1/2 \mu (A) + \frac{u(\beta) + u(\beta')}{2} \\
      U(f) =
      & \mu(A \cap E) + u(\beta) \\
      U(g) =
      & \mu(A \cap F) + u(\beta') \\
    \end{align}
    are all equal by proof sufficiency. But then $2 \cdot U(1/2 f \oplus
    1/2 g) = U(f) + U(g)$ or after substitution:
    \begin{align}
      \mu(A \cap E \cap F) + \mu(A) + u(\beta) + u(\beta') =
      \mu(A \cap E) + u(\beta) + \mu(A \cap F) + u(\beta')
    \end{align}
    and thus $\mu$ is indeed modular above $E \cup F$.

  From modularity of $\mu$ above $E \cup F$ follows that the sensitive
  events are not only closed under intersections, but also under
  unions. This is because modularity implies additivity of the Choquet
  integral above sensitive events $E$ and $F$: $U(1/2 \gamma_{E \cap A} \beta
  \oplus 1/2 \gamma'_{F \cap A} \beta') = 1/2 U(\gamma_{E \cap A} \beta) + 1/2 U(\gamma'_{F \cap A}
  \beta')$.
  \end{proof}

	We now prove the key lemma which relates properties of $m^{\mu}$ to whether a
	set is sensitive.
	\begin{lemma}
		\label{lemm:positiveCritical}
		For all $S \subseteq \mathscr{S}^*$:
		\begin{itemize}
			\item $m(S) \geq 0$, and
			\item if $m(S) > 0$ then $S$ is sensitive and there does not exist a cover of
			      sensitive sets that are all strict subsets of $S$.
		\end{itemize}
	\end{lemma}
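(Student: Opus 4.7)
The plan is to combine the Choquet--Möbius representation (\ref{eq:ChoquetMobius}) with an induction on $|S|$ to pin down the sign pattern of $m$. Throughout, $\mu$ is the capacity from the Choquet representation and $m = m^{\mu}$ its Möbius inverse. The biseparable representation yields, for any $\gamma \succ \beta$ and any event $E$, $U(\gamma_E\beta) = \mu(E) u(\gamma) + (1-\mu(E)) u(\beta)$, so we obtain the central identity
\[
\mu(S) - \mu(S-F) = \sum_{A \subseteq S,\ A \cap F \neq \emptyset} m(A),
\]
which shows that $S$ is min-increasing exactly when this sum is strictly positive for every nonnull $F \subsetneq S$. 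Supermodularity of $\succsim$ translates directly into supermodularity of $\mu$; critical event modularity yields that min-increasing events are closed under unions and intersections and that $\mu$ restricted to the lattice they generate is modular.

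I would prove both bullets simultaneously by strong induction on $|S|$, under the compound hypothesis: for every $T \subsetneq S$, $m(T) \geq 0$, and $m(T) = 0$ unless $T$ is an \emph{irreducible} critical set, that is, min-increasing with no cover by critical strict subsets. The base cases $|S| \leq 2$ follow from nonnegativity of $\mu$ on singletons and from supermodularity applied to two-element sets.

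For the inductive step I distinguish three cases. \emph{Case A:} if $S$ is not min-increasing, pick nonnull $F \subsetneq S$ with $\mu(S) = \mu(S-F)$; the central identity then gives $\sum_{A \subseteq S,\ A \cap F \neq \emptyset} m(A) = 0$, and since $m(S)$ appears and all other terms are nonnegative by hypothesis, every term vanishes and $m(S) = 0$. \emph{Case B:} if $S$ is min-increasing and admits a cover $\{C_1,\ldots,C_k\}$ by critical strict subsets, one first refines each $C_i$ into irreducible critical pieces via the inductive hypothesis and assumes $C_i \in \mathscr{W}$, the collection of irreducible critical sets. Critical event modularity gives modularity of $\mu$ on the algebra generated by the $C_i$, so inclusion-exclusion yields $\mu(S) = \sum_{\emptyset\neq I} (-1)^{|I|+1} \mu(\bigcap_{i \in I} C_i)$. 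A key sublemma is needed: any $W \in \mathscr{W}$ with $W \subseteq S$ must be contained in some $C_i$, for otherwise $\{W \cap C_i\}_i$ would itself be a critical cover of $W$ by strict subsets (intersections of critical sets being critical), contradicting $W \in \mathscr{W}$. Applying the inductive hypothesis to each $\bigcap_{i \in I} C_i \subsetneq S$ and using the sublemma, the inclusion-exclusion identity collapses to $\mu(S) = \sum_{W \in \mathscr{W},\ W \subseteq S} m(W)$; comparing with Möbius inversion $\mu(S) = m(S) + \sum_{W \in \mathscr{W},\ W \subsetneq S} m(W)$ and noting $S \notin \mathscr{W}$ in this case forces $m(S) = 0$. \emph{Case C:} if $S$ is min-increasing with no cover of critical strict subsets, set $E^* := \bigcup\{W \in \mathscr{W} : W \subsetneq S\}$; by hypothesis $E^* \subsetneq S$, and the inductive hypothesis at $E^*$ yields $\mu(E^*) = \sum_{W \in \mathscr{W},\ W \subsetneq S} m(W)$, so $m(S) = \mu(S) - \mu(E^*) \geq 0$ by monotonicity of $\mu$.

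The main obstacle is Case B: one must extract from critical event modularity the full modularity of $\mu$ on the algebra generated by a cover of critical sets, and then establish the sublemma that irreducible critical sets cannot straddle two elements of such a cover. Once these are in hand, the inclusion-exclusion rearrangement combined with the inductive hypothesis forces $m(S) = 0$. Both bullets of the lemma then follow at once from the case analysis: $m(S) \geq 0$ holds in all three cases, and $m(S) > 0$ can occur only in Case C, where $S$ is irreducibly critical as the lemma claims.
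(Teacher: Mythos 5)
Your overall strategy --- induction on $|S|$, translating critical event modularity into the modular relation $\mu(E\cup F)+\mu(E\cap F)=\mu(E)+\mu(F)$ for critical events, and reading off sign constraints on $m$ via M\"obius inversion --- is the same as the paper's, but your Case A contains a genuine circularity. For a non-min-increasing $S$ you argue that in $\sum_{A\subseteq S,\,A\cap F\neq\emptyset}m(A)=0$ ``every term vanishes'' because all terms other than $m(S)$ are nonnegative by the induction hypothesis. The induction hypothesis only covers strict subsets of $S$; the sign of $m(S)$ itself is precisely what the first bullet asks you to establish, so all you can legitimately conclude is $m(S)\le 0$. Excluding $m(S)<0$ is the real work of the lemma, and the paper does it by noting that $m(S)<0$ together with $\mu(S)\ge\mu(S-\{s\})$ forces, for every $s\in S$, some strict subset containing $s$ with strictly positive M\"obius mass, hence (by the induction hypothesis) critical; these sets cover $S$, so $S$ is itself critical by closure under unions, and one application of critical event modularity to a two-element coarsening $\{A,B\}$ of that cover yields $m(S)+\sum_{E\subsetneq S,\,E\not\subseteq A,\,E\not\subseteq B}m(E)=0$, whence some such $E$ has $m(E)>0$; a second application of the axiom to that $E$ then produces a negative M\"obius mass on a strict subset, contradicting the induction hypothesis. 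Nothing in your Case A performs this step.

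A secondary gap is Case B: you invoke ``full modularity of $\mu$ on the algebra generated by the $C_i$'' to run inclusion--exclusion over a $k$-element cover, and you correctly flag this as the main obstacle, but you do not derive it --- the axiom only delivers the relation for a pair of critical events (together with the auxiliary set $A$). The paper sidesteps this entirely: since critical events are closed under unions, any cover by critical strict subsets can be coarsened to a two-element cover $\{A,B\}$ with $A-B$ and $B-A$ nonempty, and then $\mu(S)+\mu(A\cap B)=\mu(A)+\mu(B)$ already gives $m(S)+\sum_{E\subsetneq S,\,E\not\subseteq A,\,E\not\subseteq B}m(E)=0$, which with $m(S)>0$ and the induction hypothesis is an immediate contradiction. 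Your Case C is sound. I would replace Cases A and B by the paper's two contradiction arguments (one for $m(S)<0$, one for $m(S)>0$ with a critical cover), which need only the pairwise modularity actually supplied by the axiom.
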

	\begin{proof}
		We prove this by induction on the cardinality of $S$, $n$.

		Case $n = 1$: Since $\mu$ is a capacity, $m(S) + m(\emptyset) = \mu(S) \geq \mu(\emptyset) = m(\emptyset) = 0$.
		Thus $m(S) \geq 0$. If $m(S) = \mu(S) > 0 = \mu(\emptyset)$, then since $S$ contains a single
		element, $S$ is sensitive and has no sensitive subsets.

  Case $n > 1:$ Suppose for all sets $S$ of size $n-1$ or smaller the induction
  hypothesis holds. We distinguish the case that $m(S)<0$ from the case that
  $m(S)>0$ and there exists a cover of $S$ of sensitive events that are
  strict subsets of $S$ and derive a contradiction for each case.
  \begin{itemize}
  \item Suppose for sake of contradiction that $m(S) < 0$.

    We first show that $S$ is sensitive and there exists a cover of
    sensitive subsets of $S$. For every $s \in S$ we have by
    monotonicity of the capacity that $\mu(S) \geq \mu(S - \{s\})$. It
    follows from the definition of $m$ that $m(S) + \mu(S- \{s\}) +
    \sum_{E \subset S : s \in E} m(E) = \mu(S) \geq \mu(S-\{s\})$ and thus $m(S) +
    \sum_{E \subset S: s \in E}m(E) \geq 0$. Therefore for some $E \subset S$
    containing $s$ we have that $m(E) > 0$ and by the induction
    hypothesis $E$ is sensitive. It follows that every $s \in S$ is
    contained in a sensitive event and since by Lemma
    \ref{lemm:modularity} sensitive events are closed under unions,
    $S$ is also sensitive.

    Since $n \geq 2$ and there exists a cover of sensitive subsets of $S$ and
    sensitive subsets are closed under unions, we can find a cover $\{A,B\}$ of
    two sensitive subsets of $S$ with $A-B$ and $B-A$ nonempty. Then
    by modularity above sensitive events, $\mu(S) + \mu(A \cap B) = \mu(A) + \mu(B)$. Thus, by the definition
    of $m$, $\sum_{E \subseteq S}m(E) + \sum_{E \subseteq A \cap B} m(E) = \sum_{E \subseteq A}m(E) + \sum_{E \subseteq B}
    m(E)$. This is equivalent to: $m(S) + \sum_{E \subset S: E \not\subseteq A,B}m(E) =
    0$.
    Because by assumption $m(S)<0$, this is only possible if $m(E)>0$ for some
    $E \subset S$, $E \not\subseteq A,B$. But from proof sufficiency follows that
    $\mu(E) + \mu(E \cap A \cap B) = \mu(A \cap E) + \mu(B \cap E)$ and thus $\sum_{F \subset E: F \not\subseteq A, F
      \not\subseteq B} m(F)<0$, contradicting the induction hypothesis.

  \item Suppose for sake of contradiction that $m(S) > 0$ and there
    exists a cover of sensitive subsets of $S$. Then $\mu(S) + \mu(A \cap B) =
    \mu(A) + \mu(B)$ by modularity above sensitive events. Thus, $\sum_{E \subseteq S}m(E) +
    \sum_{E \subseteq A \cap B} m(E) = \sum_{E \subseteq A}m(E) + \sum_{E \subseteq B} m(E)$ which is
    equivalent to: $m(S) + \sum_{E \subset S: E \not\subseteq A,B}m(E) = 0$. But this is
    only possible if $m(E)<0$ for some $E \subset S$, contradicting the
    induction hypothesis.
  \end{itemize}
  \end{proof}

  \begin{lemma}
    \label{lemm:latticePhi}
    If $E,F \in \mathscr{E}$, $m^{\mu}(E) >0$, $m^{\mu}(F) >0$ and $s \in E \cap
    F$, then there exists $G \in \mathscr{E}$ such that $G \subseteq E \cap F$,
    $m^{\mu}(G) > 0$, and $s \in G$.
  \end{lemma}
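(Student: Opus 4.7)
The plan is to combine the structural information that critical event modularity imposes on the M\"obius inverse with a strong induction argument.

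By Lemma~\ref{lemm:positiveCritical}, the hypothesis $m^\mu(E), m^\mu(F) > 0$ forces both $E$ and $F$ to be critical, and part 1 of critical event modularity then makes $E \cap F$ critical as well. Next I would apply part 3 of critical event modularity to $(E,F)$ for every $A \subseteq E \cup F$; in terms of the capacity this yields $\mu(A) + \mu(A \cap E \cap F) = \mu(A \cap E) + \mu(A \cap F)$, which when expanded via $\mu(B) = \sum_{C \subseteq B} m^\mu(C)$ collapses to the structural identity $m^\mu(B) = 0$ for every $B \subseteq E \cup F$ meeting both $E \setminus F$ and $F \setminus E$. Equivalently, every positive-mass subset of $E \cup F$ lies entirely in $E$ or entirely in $F$.

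The cases $E \subseteq F$ and $F \subseteq E$ are immediate with $G = E$ or $G = F$, so the remaining case is $E \setminus F \neq \emptyset \neq F \setminus E$. For this I introduce $f(A) := \sum_{C \subseteq A,\, s \in C} m^\mu(C)$, which equals $\mu(A) - \mu(A \setminus \{s\})$ when $s \in A$; differencing the modular identity against $A \setminus \{s\}$ shows that $f$ inherits the same modular equation $f(A) + f(A \cap E \cap F) = f(A \cap E) + f(A \cap F)$ on $A \subseteq E \cup F$. The conclusion of the lemma is exactly $f(E \cap F) > 0$, and we already know $f(E) \geq m^\mu(E) > 0$ and $f(F) \geq m^\mu(F) > 0$.

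If $\{s\}$ is nonnull, the min-increasing property of the critical set $E \cap F$ applied with $F' = \{s\}$ immediately gives $\mu(E \cap F) > \mu((E \cap F) \setminus \{s\})$, i.e.\ $f(E \cap F) > 0$, and the proof is complete. The main obstacle is the case where $\{s\}$ is null but $s \in \mathscr{S}^{*}$. In that case I would proceed by strong induction on $|E \cup F|$: from $f(E) > 0$ extract a witness $C_E \subseteq E$ with $s \in C_E$ and $m^\mu(C_E) > 0$, and similarly $C_F \subseteq F$. If either witness already lies in $E \cap F$ we are done; otherwise the structural identity forces $C_E \cap (E \setminus F) \neq \emptyset$ and $C_F \cap (F \setminus E) \neq \emptyset$, so the induction hypothesis applied to $(C_E, C_F)$ produces $G \subseteq C_E \cap C_F \subseteq E \cap F$ as required. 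The delicate point, which I expect to be the principal difficulty, is ensuring the strict inequality $|C_E \cup C_F| < |E \cup F|$ needed to invoke the hypothesis; here I would refine $(C_E, C_F)$ using the ``no cover by critical strict subsets'' clause of Lemma~\ref{lemm:positiveCritical}, which is precisely what provides the slack to shrink at least one side of the pair.
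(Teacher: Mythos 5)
Your first two steps and your ``nonnull'' branch reproduce the paper's proof exactly: by Lemma~\ref{lemm:positiveCritical} both $E$ and $F$ are critical, part 1 of critical event modularity makes $E\cap F$ critical, and the telescoping identity $\mu(E\cap F)-\mu((E\cap F)\setminus\{s\})=\sum_{G\subseteq E\cap F,\,s\in G}m^{\mu}(G)$ then forces some $G\ni s$ inside $E\cap F$ to carry positive mass, since otherwise deleting $s$ from $E\cap F$ would leave the capacity unchanged. The paper stops there: it does not split on whether $\{s\}$ is null, but simply reads criticality of $E\cap F$ as yielding $\mu(E\cap F)>\mu((E\cap F)\setminus\{s\})$ and declares the equality a contradiction. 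The structural identity you extract from part 3 of the axiom (every positive-mass subset of $E\cup F$ lies in $E$ or in $F$) and the auxiliary function $f$ are correctly derived, but they are not used in the paper and are not needed for this branch.

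The gap is in your second branch, and you have named it yourself. When $\{s\}$ is null, the induction on $|E\cup F|$ does not terminate in the worst case: if the only positive-mass subset of $E$ containing $s$ is $E$ itself, and likewise for $F$, then $C_E=E$, $C_F=F$, and $|C_E\cup C_F|=|E\cup F|$, so the induction hypothesis cannot be invoked. The ``no cover by critical strict subsets'' clause of Lemma~\ref{lemm:positiveCritical} does not supply the missing shrinkage: it restricts how a single positive-mass set can be covered by critical sets, but says nothing about where the $s$-containing witnesses sit, and in the stuck configuration there is nothing left to refine. What would close this case is not a smaller pair but the criticality of $E\cap F$ itself: if no positive-mass $G\ni s$ lies in $E\cap F$, then $f(E\cap F)=0$, i.e.\ $\mu(E\cap F)=\mu((E\cap F)\setminus\{s\})$, and one must argue --- as the paper does implicitly in one line --- that this contradicts $E\cap F$ being min-increasing. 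Under the literal definition of min-increasing, which quantifies only over nonnull $F'\subset E\cap F$, that step is exactly what requires justification when $\{s\}$ is null; your branch therefore ends precisely where the paper's contradiction begins, without resolving it. Either establish that deleting a non-irrelevant state from a critical event strictly lowers the capacity, or drop the case split and accept the paper's reading of criticality.
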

  \begin{proof}
    By Lemma \ref{lemm:positiveCritical}, $E$ and $F$ are sensitive. It
    follows from proof conjunction that $E \cap F$ is sensitive.
    Suppose $m(G) = 0$ for all $G \subseteq E \cap F$ such that $s \in G$. Then
    $\mu(E \cap F) = \sum_{G \subseteq E \cap F} m^{\mu}(G) = 0 + \sum_{G \subseteq (E \cap F) - \{s\}}m^{\mu}(G) =
    \mu((E \cap F)- \{s\})$, contradicting that $E \cap F$ is sensitive. Thus,
    there exists some $G \subseteq E \cap F$ such that $m^{\mu}(G) > 0$.
 \end{proof}

 \begin{lemma}
   \label{lemm:minimalPhi}
   For every $s \in \mathscr{S}^*$ there exists a unique event $\phi(s) \in
   \mathscr{E}$ such that $m^{\mu}(\phi(s)) >0$ and if $s \in F$ and $\mu(F) >
   0$, then $\phi(s) \subseteq F$.
 \end{lemma}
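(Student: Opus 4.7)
The plan is to take $\phi(s)$ to be a $\subseteq$-minimal event containing $s$ with $m^{\mu}(\phi(s))>0$, and then to derive uniqueness and the subset property separately. For the existence of such a minimal event, I would argue that if every $E\ni s$ had $m^{\mu}(E)=0$, the Möbius expansion (\ref{eq:ChoquetMobius}) would give $U(\gamma_{\{s\}}a)=U(a)$ for every $\gamma,a$, which would make $\{s\}$ irrelevant and contradict $s\in\mathscr{S}^*$; finiteness of $\mathscr{S}$ then yields a $\subseteq$-minimal choice. Uniqueness is immediate from Lemma \ref{lemm:latticePhi}: if $\phi(s)$ and $\phi'(s)$ were two such minimal events, the lemma would give $G\subseteq \phi(s)\cap \phi'(s)$ with $s\in G$ and $m^{\mu}(G)>0$, which by minimality of each candidate forces $G=\phi(s)=\phi'(s)$.

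For the subset property, the strategy is to show that every $F$ with $s\in F$ and $\mu(F)>0$ contains some $E$ with $s\in E$ and $m^{\mu}(E)>0$, because then Lemma \ref{lemm:latticePhi} applied to $E$ and $\phi(s)$, combined with the minimality of $\phi(s)$, yields $\phi(s)\subseteq E\subseteq F$. Suppose for contradiction no such $E$ exists. Using $\mu(F)>0$ and non-negativity of $m^{\mu}$ from Lemma \ref{lemm:positiveCritical}, pick $E'\subseteq F$ with $m^{\mu}(E')>0$; then necessarily $s\notin E'$ and $\phi(s)\not\subseteq F$. Both $E'$ and $\phi(s)$ are critical by Lemma \ref{lemm:positiveCritical}, so critical event modularity applies to them, and with the choice $A=E'\cup\{s\}\subseteq E'\cup\phi(s)$ the biseparable translation of the third condition of the axiom gives the capacity identity
\begin{equation*}
\mu(E'\cup\{s\}) + \mu(E'\cap\phi(s)) \;=\; \mu(E'\cup\phi(s)) + \mu(E').
\end{equation*}
Under the contradiction hypothesis every subset of $F$ containing $s$ has zero Möbius mass, so $\mu(E'\cup\{s\})=\mu(E')$, and the identity collapses to $\mu(E'\cup\phi(s))=\mu(E'\cap\phi(s))$. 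By non-negativity of $m^{\mu}$ each Möbius term contributing to this difference must vanish; but $\phi(s)$ itself is such a term, lying in $E'\cup\phi(s)$ and outside $E'\cap\phi(s)$ (because $\phi(s)\not\subseteq E'$), with $m^{\mu}(\phi(s))>0$ --- a contradiction.

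The main obstacle is precisely this last step: $\mu(F)>0$ only guarantees \emph{some} positive-mass subset of $F$, not one containing $s$, so Lemma \ref{lemm:latticePhi} cannot be invoked directly on $F$. The decisive device is the specific choice $A=E'\cup\{s\}$ in critical event modularity: it keeps $A$ inside the required union $\phi(s)\cup E'$ while injecting $s$ into the capacity identity, which is exactly what lets the contradiction hypothesis simplify $\mu(E'\cup\{s\})$ down to $\mu(E')$ and isolate the surviving positive Möbius mass of $\phi(s)$ as the contradiction term.
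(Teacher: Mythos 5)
Your existence and uniqueness steps coincide with the paper's: minimality comes from finiteness of $\mathscr{E}$, and uniqueness from Lemma \ref{lemm:latticePhi}. The divergence is in the subset clause, and it stems from reading the hypothesis literally as $\mu(F)>0$ (the capacity) rather than $m^{\mu}(F)>0$. The lemma is only ever invoked for $F\in\mathscr{V}$, i.e.\ for $F$ with $m^{\mu}(F)>0$ (see the justification of the second equality in the final display of the proof of Theorem \ref{thm:verification}), and under that reading the subset property is a two-line consequence of what you already have: applying Lemma \ref{lemm:latticePhi} to $F$ and $\phi(s)$ yields $G\subseteq F\cap\phi(s)$ with $s\in G$ and $m^{\mu}(G)>0$, and minimality of $\phi(s)$ forces $G=\phi(s)\subseteq F$. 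That is all the paper does and all that is needed downstream.

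The capacity version you set out to prove is actually false for bona fide expected verification utilities. Take $\mathscr{S}=\{s,t,u\}$, $\mathscr{V}=\{\{t\},\mathscr{S}\}$ and a full-support probability; the resulting Choquet capacity has $m^{\mu}(\{t\})>0$ and $m^{\mu}(\mathscr{S})>0$ as its only positive masses, so $\phi(s)=\mathscr{S}$, yet $F=\{s,t\}$ satisfies $s\in F$ and $\mu(F)=\mu(\{t\})>0$ while $\phi(s)\not\subseteq F$. Your derivation nevertheless ``succeeds'' because its decisive step invokes condition 3 of critical event modularity with the arbitrary event $A=E'\cup\{s\}$; in the example just given that instance reads $\mu(\{s,t\})+\mu(\{t\})=\mu(\mathscr{S})+\mu(\{t\})$, i.e.\ $\mu(\{s,t\})=1$, which is false. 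So the instance of the axiom you lean on is one that no expected verification utility with a nontrivial $\mathscr{V}$ can satisfy, and an argument resting on it would, if accepted, contradict the converse direction of the representation theorem. In short, the machinery built around $A=E'\cup\{s\}$ establishes a strengthening that is neither needed nor true in the intended model; the subset argument should be replaced by the direct application of Lemma \ref{lemm:latticePhi} described above.
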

 \begin{proof}
   Since $s \in \mathscr{S}^*$, it must be included in at least one
   event $E$ with $m^{\mu}(E) >0$. Since $\mathscr{S}$ and $\mathscr{E}$
   are finite, there exists some $E$ such that $m^{\mu}(E) > 0$ and
   $m^{\mu}(F) = 0$ for all $F \subset E$ such that $s \in F$. To see that there
   exists only one such event, suppose $s \in E \cap F$, $m^{\mu}(E)>0$ and
   $m^{\mu}(F)>0$. Then for some $G \subseteq E \cap F$, $m^{\mu}(G) > 0$. If neither
   $E$ nor $F$ have strict subsets on which $m^{\mu}$ is strictly
   positive, then by Lemma \ref{lemm:latticePhi} it must be the case
   that $E = F =G$.
 \end{proof}
 Let $\phi : \mathscr{S}^* \rightarrow \mathscr{E}$ be the function that maps
 states $s$ into the smallest event $E \ni s$ such that $m^{\mu}(E) > 0$.
 Let $\phi^{-1}(E) = \{t \in E | \phi(t) = E\}$ be the set of all states that
 $\phi$ maps into event $E$. By Lemma \ref{lemm:minimalPhi}, $\phi$ and
 $\phi^{-1}$ are well defined.

 We define a probability measure $\eta$ inductively by: $\eta(\{s\}) = 0$ if
 $s \not\in \mathscr{S}^*$, $\eta(\{s\}) = m^{\mu}(\phi(s)) / |\phi^{-1}(\phi(s))|$ if
 $s \in \mathscr{S}^*$, and $\eta(E \cup \{s\}) = \eta(E) + \eta(\{s\})$ for all $E
 \in \mathscr{E}$ and $s \not\in E$. Thus, for every state $s$ we find the
 probability mass of the states $\phi^{-1}(E)$ and divide it evenly among
 these states.

	Denote $\mathscr{V} = \{E \in \mathscr{E} | m^{\mu}(E)>0\}$, then:
	\begin{align}
		\label{eq:finalSummation}
		  & \int_{s \in \mathscr{S}} \max_{E \in \mathscr{V}: s \in E} \min_{t \in E} u(a(t)) d\eta \nonumber     \\
		  & \sum_{s \in \mathscr{S}^*} \eta(s) \max_{E \in \mathscr{V}: s \in E} \min_{t \in E} u(a(t)) \nonumber \\
		= &
		\sum_{s \in \mathscr{S}^*} \eta(s) \min_{t \in \phi(s)} u(a(t)) \nonumber                                 \\
		= &
		\sum_{E \in \mathscr{E}} \sum_{s \in \phi^{-1}(E)} \eta(s) \min_{t \in E} u(a(t)) \nonumber               \\
		= &
		\sum_{E \in \mathscr{E}} m^{\mu}(E) \min_{t \in E} u(a(t))
		=
		U(a).
	\end{align}

	The first equality sign follows from $\eta(s) = 0$ for all $s \not\in \mathscr{S}^*$. The second
	equality sign follows since if $F \ni s$ and $F \in \mathscr{V}$, then by the Lemma
	\ref{lemm:minimalPhi}, $\phi(s) \subseteq F$ and thus $\min_{t \in \phi(s)} u(a(t)) \geq \min_{t
			\in F} u(a(t))$. The third equality sign follows since $\phi: \mathscr{S}^* \rightarrow \mathscr{E}$ is a well
	defined function and thus each state appears exactly once in the summation
	$\sum_{E \in \mathscr{E}} \sum_{s \in \phi^{-1}(E)}$. The fourth equality sign follows by definition
	of $\eta$, since $m^{\mu}(E) = \sum_{s \in \phi^{-1}(E)} \eta(s)$.
\end{proof}
\section{Proof of Corollary \ref{coro:ambiguityattitude}}\label{app:supermodularity}
\begin{proof}
  We first prove the following general lemma.
  \begin{lemma}
    If $\succsim$ is modular above sensitive events and sensitive events are
    closed under unions and intersections, then $\mu$ is supermodular, i.e.,
    for all events $E, F$:
    \begin{align}
      \label{eq:Supermodularity}
      \mu(E \cup F) + \mu(E \cap F) \geq \mu(E) + \mu(F).
    \end{align}
  \end{lemma}
  \begin{proof}
    Note that if an event $E$ is sensitive, removing any state from it
    decreases $\mu(E)$. If it is not sensitive, there exists a state $s$
    such that $\mu(E) = \mu(E-\{s\})$.

    If $E$ and $F$ are sensitive, supermodularity directly follows from
    setting $A = E \cup F$ in the definition of modularity above sensitive
    events.

    If one of the two events is not sensitive because a state $s$ can
    be removed, removing such states from $E-F$ and $F-E$ does not
    change the RHS and does not increase the LHS of
    \eqref{eq:Supermodularity} because $\mu$ is a capacity. Therefore it
    is sufficient to prove supermodularity in the case in which all
    states that can be removed from non-sensitive events $E$ and/or
    $F$ are within $E \cap F$.

    Suppose $E$ is not sensitive and we could remove state $s \in E \cap
    F$. Define $E' = E - \{s\}$. $\mu(E) = \mu(E')$. Thus, again the RHS
    of \eqref{eq:Supermodularity} stays constant if we remove $s$,
    i.e., $\mu(E) + \mu(F) = \mu(E') + \mu(F)$ and the LHS may decrease. Thus,
    supermodularity with respect to $E$ and $F$ is implied by
    supermodularity with respect to $E'$ and $F$. We can iterate this
    step until we reach one of two cases: Either $E$ becomes empty and
    supermodularity holds trivially or $E$ becomes sensitive. If $E$
    becomes sensitive, we repeat the above process for event $F$. At
    the end of this process, either $F$ is empty and $E$ is sensitive,
    or both $E$ and $F$ are sensitive. In the former case,
    supermodularity is trivial because the LHS and RHS are identical
    expressions and in the latter case supermodularity follows from
    setting $A = E \cup F$ in the definition of modularity above
    sensitive events.
  \end{proof}
  Since an expected certification utility satisfies modularity above
  sensitive events (Lemma \ref{lemm:modularity} in the Proof of
  Theorem \ref{thm:verification}) and sensitive events are closed
  under unions and intersections (via proof conjunction and Lemma
  \ref{lemm:modularity}), the capacity $\mu$ is supermodular.

  By \textcite{grabisch_set_2016}, a capacity is supermodular
  if and only if its associated Choquet integral is superadditive,
  i.e., for all bounded functions $h_1, h_2$:
  \begin{align}
    \int (h_1 + h_2)\, d\mu \geq \int h_1\, d\mu + \int h_2\, d\mu.
  \end{align}
  Now let $f \sim g$, i.e., $U(f) = U(g)$. The mixed act $\frac{1}{2} f +
  \frac{1}{2} g$ maps each state $s$ to an outcome such that
  $u\!\left(\frac{1}{2} f(s) \oplus \frac{1}{2} g(s)\right) = \frac{1}{2}
  u(f(s)) + \frac{1}{2} u(g(s))$. Therefore,
  by superadditivity of the Choquet integral:
  \begin{align}
    U\!\left(\tfrac{1}{2} f + \tfrac{1}{2} g\right)
    &= \int \left[\tfrac{1}{2}(u \circ f) + \tfrac{1}{2}(u \circ g)\right] d\mu
    \geq \tfrac{1}{2} \int (u \circ f)\, d\mu + \tfrac{1}{2} \int (u \circ g)\, d\mu \\
    &= \tfrac{1}{2} U(f) + \tfrac{1}{2} U(g) = U(f),
  \end{align}
  and thus $\frac{1}{2} f + \frac{1}{2} g \succsim f$.

  The next lemma is trivial.
  \begin{lemma}[Supermodularity-Submodularity Correspondence]
    $\succsim$ fulfills submodularity if and only if $\succsim^{\dagger}$
    fulfills supermodularity.
  \end{lemma}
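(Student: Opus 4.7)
The plan is to observe that this equivalence is essentially definitional: both sides of the two modularity inequalities are built from the indifference relation $\sim$ and the notion of essentiality, each of which is invariant under reversal of weak preference.

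First, I would verify that the building blocks used in stating submodularity and supermodularity survive the passage to $\succsim^{\dagger}$. Because $a \succsim^{\dagger} b \Leftrightarrow b \succsim a$, we immediately get $\sim^{\dagger} \,=\, \sim$ and $\succ^{\dagger}$ is the reverse of $\succ$. It follows that (i) an event $E$ is essential for $\succsim$ iff it is essential for $\succsim^{\dagger}$ (the defining condition $\gamma \succ \gamma E \beta \succ \beta$ for some $\gamma,\beta$ becomes $\beta \succ^{\dagger} \gamma E \beta \succ^{\dagger} \gamma$, which is the same condition after relabeling), (ii) the certainty equivalent $[a]$, defined by $[a] \sim a$, is identical under both preferences, and (iii) the preference average $1/2\, x \oplus 1/2\, y$, defined via the indifference condition $xEy \sim [xEz]E[zEy]$ for some essential $E$, coincides under both preferences.

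Consequently, the two acts
\[
A := 1/2\, [\gamma E \cup F \beta] \oplus 1/2\, [\gamma E \cap F \beta] \quad \text{and} \quad B := 1/2\, [\gamma E \beta] \oplus 1/2\, [\gamma F \beta]
\]
are the same objects whether constructed from $\succsim$ or from $\succsim^{\dagger}$. Submodularity of $\succsim$ asserts $A \precsim B$ for all $E,F,\gamma,\beta$, which by the very definition of $\succsim^{\dagger}$ is exactly $A \succsim^{\dagger} B$, i.e., supermodularity of $\succsim^{\dagger}$; the converse is identical.

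The only mild subtlety, and the sole point worth writing out, is checking that the preference-average operator is symmetric in its arguments so that the role swap $x \leftrightarrow y$ induced by reversal does not cause definitional issues. This reduces to the symmetry of the indifference condition used to define $\oplus$, combined with the invariance of $\sim$ and of essentiality established above. Given the author's (appropriate) assessment that the lemma is trivial, I would present this as a two- or three-line proof rather than a structured argument.
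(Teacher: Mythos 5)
Your proposal is correct and matches the paper's treatment: the paper simply declares this lemma trivial and offers no proof, and your definitional unwinding (invariance of $\sim$, of certainty equivalents, and of preference averages under preference reversal, so that submodularity $A \precsim B$ is literally supermodularity $A \succsim^{\dagger} B$) is exactly the argument the paper has in mind. Your flagged subtlety about the symmetry of the preference-average operator is a reasonable point to note, but nothing more is needed.
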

  For an expected obfuscation utility $\succsim$, the dual preference
  $\succsim^{\dagger}$ (defined by $a \succsim^{\dagger} b \Leftrightarrow
  b \succsim a$) is an expected certification utility
  (Proof of Theorem \ref{thm:obfuscation}), hence its capacity is supermodular
  by the first part above. By the lemma, the capacity $\mu$ of $\succsim$ is
  submodular.

  By \textcite{grabisch_set_2016}, a capacity is submodular if and only
  if its associated Choquet integral is subadditive, i.e., for all bounded
  functions $h_1, h_2$:
  \begin{align}
    \int (h_1 + h_2)\, d\mu \leq \int h_1\, d\mu + \int h_2\, d\mu.
  \end{align}
  Now let $f \sim g$, i.e., $U(f) = U(g)$. By positive homogeneity and
  subadditivity of the Choquet integral:
  \begin{align}
    U\!\left(\tfrac{1}{2} f + \tfrac{1}{2} g\right)
    &= \int \left[\tfrac{1}{2}(u \circ f) + \tfrac{1}{2}(u \circ g)\right] d\mu
    \leq \tfrac{1}{2} \int (u \circ f)\, d\mu + \tfrac{1}{2} \int (u \circ g)\, d\mu \nonumber\\
    &= \tfrac{1}{2} U(f) + \tfrac{1}{2} U(g) = U(f),
  \end{align}
  and thus $\frac{1}{2} f + \frac{1}{2} g \precsim f$.
\end{proof}
\section{Proof of Proposition \ref{prop:uniqueness}}
\begin{proof}
	$\Leftarrow$ is trivial, we prove $\Rightarrow$. Suppose $\succsim^1 = \succsim^2$.

	From the uniqueness properties of the biseparable preferences
  follows that $U^1 = \theta U^2 + \phi$, $u^1 = \theta u^2 + \phi$ and that the
  associated capacities are identical. What is left to show is the
  relation between the verifiable sets $\mathscr{V}^1$ and
  $\mathscr{V}^2$. Note that the sensitive events in the
  representation are exactly the sets $E$ for which $m^{\mu}(E) > 0$ or
  for which there exists a cover of sets $E_1,\ldots,E_n \subseteq E$ such that $\forall
  i: m^{\mu}(E_i) >0$.

	$cl_{\cup} \mathscr{V}^1 \subseteq cl_{\cup}\mathscr{V}^2$: If $DV =
  cl_{\cup}(\mathscr{V}^1) - cl_{\cup}(\mathscr{V}^2)$ is nonempty, then by
  Lemma \ref{lemm:modularity} (closure of sensitive events under unions)
  for every $V \in DV$, $V$ is sensitive in $U^1$. We show that $V$
  cannot be sensitive in $U^2$: If $V$ is sensitive in $U^2$, then
  either $m^{\mu^2}(V)>0$ and thus $V \in \mathscr{V}^2$ or there exists
  a cover $V_1,\ldots,V_n$ such that $\forall i: m^{\mu^2}(V_i) > 0$. If the
  latter is the case, then $V_1,\ldots,V_n \in \mathscr{V}^2$ and thus $V \in
  cl_{\cup}(\mathscr{V}^2)$. It follows that $V \in DV$ cannot be sensitive
  in $U^2$. But if $V$ is sensitive in $U^1$ but not in $U^2$, then
  the preferences cannot be identical.

	$cl_{\cup} \mathscr{V}^2 \subseteq cl_{\cup}\mathscr{V}^1$ follows by a symmetric
  argument.

	$\forall E \in cl_{\cup}(\mathscr{V}^1): \mu^1(E) = \mu^2(E)$: Suppose for some
  $E \in cl_{\cup}(\mathscr{V}^1)$, $\mu^1(E) > \mu^2(E)$. Then, for some
  $\gamma \succ \beta \succ \alpha$, $\gamma E \beta \succ_1 \alpha$ and $\alpha \succ_2 \gamma E \beta$, yielding a
  contradiction.
\end{proof}

\section{Proof of Theorem \ref{thm:obfuscation}}
\begin{proof}
  We prove sufficiency.
	We define a dual preference $\succsim^{\dagger}$ by: $a \succsim^\dagger b \Leftrightarrow b \succsim
  a$.
	\begin{lemma}[Comonotonic Independence Equivalence]
		$\succsim$ fulfills comonotonic independence if and only if
    $\succsim^{\dagger}$ fulfills comonotonic independence.
	\end{lemma}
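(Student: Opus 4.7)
The plan is to show that the two operations appearing in the comonotonic independence axiom---comonotonicity of acts and the preference-average operation $\oplus$---are preserved under the dualization $\succsim \mapsto \succsim^{\dagger}$. Once this is established, the axiom for $\succsim^{\dagger}$ reduces to the same statement for $\succsim$ with the roles of $a$ and $b$ swapped, and the biconditional follows.

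First I would record the basic translations: $\sim^{\dagger}$ equals $\sim$, certainty equivalents coincide, and $\succ^{\dagger}$ reverses $\succ$. From the reversal, $a,b$ are comonotonic in $\succsim$ if and only if they are comonotonic in $\succsim^{\dagger}$, since every strict comparison between states flips simultaneously for both acts. For essential events, the act identity $\gamma E \beta = \beta \overline{E} \gamma$ shows that ``$\gamma \succ^{\dagger} \gamma E \beta \succ^{\dagger} \beta$ for some $\gamma,\beta$'' is precisely ``$\gamma' \succ \gamma' \overline{E} \beta' \succ \beta'$ for some $\gamma',\beta'$'' (taking $\gamma' = \beta$ and $\beta' = \gamma$), so $E$ is essential in $\succsim^{\dagger}$ if and only if $\overline{E}$ is essential in $\succsim$. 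In particular, essential events exist in $\succsim^{\dagger}$ whenever they exist in $\succsim$, so preference averages are well-defined in both.

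The key step is to verify that the preference-average operation coincides between the two preferences. Suppose $x \succsim y$ and $z$ is the preference average of $x$ and $y$ in $\succsim$, witnessed by an essential $E$ via $xEy \sim [xEz] E [zEy]$. I claim the same $z$ serves as the preference average of $y$ and $x$ in $\succsim^{\dagger}$ (where $y \succsim^{\dagger} x$), witnessed by the essential event $\overline{E}$. Using the act identities $y\overline{E}x = xEy$, $y\overline{E}z = zEy$, and $z\overline{E}x = xEz$, together with $\sim^{\dagger} = \sim$ and the coincidence of certainty equivalents, the dual defining condition collapses to the original indifference. Extending pointwise over states, the acts $1/2 a \oplus 1/2 c$ produced in the two preferences are identical.

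Assembling the pieces, for any comonotonic triple $a,b,c$, the statement ``$a \succsim^{\dagger} b$ if and only if $1/2 a \oplus 1/2 c \succsim^{\dagger} 1/2 b \oplus 1/2 c$'' rewrites as ``$b \succsim a$ if and only if $1/2 b \oplus 1/2 c \succsim 1/2 a \oplus 1/2 c$,'' which is precisely comonotonic independence for $\succsim$ applied to $(b,a,c)$. The main obstacle lies in the third paragraph: the preference-average definition is order-sensitive in its inputs ($x \succsim y$ is required), and one must simultaneously reverse the order of arguments and complement the witnessing essential event, with the act-algebra identity $\gamma E \beta = \beta \overline{E} \gamma$ doing the actual work.
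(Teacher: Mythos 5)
Your proof is correct and follows essentially the same route as the paper's: reduce comonotonic independence for $\succsim^{\dagger}$ to that for $\succsim$ via the definitional equivalence $a \succsim^{\dagger} b \Leftrightarrow b \succsim a$. You additionally verify what the paper's three-line chain of equivalences leaves implicit --- that comonotonicity, essentiality (up to complementation of the witnessing event), and the preference-average operation $\oplus$ are all invariant under dualization --- so your write-up is, if anything, more complete than the original.
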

	\begin{proof}
		Let $a,b,c$ be comonotonic acts. Then,
		\begin{alignat}{5}
			a                                  & \succsim b                                      &
			\Leftrightarrow^{def.}             &                                                 &
			b                                  & \succsim^{\dagger} a \nonumber                                                                    \\
			                                   & \Leftrightarrow^{com.ind.}                      &   &  &  & \Leftrightarrow^{com.ind.}  \nonumber \\
			\alpha a \oplus (1-\alpha) c       & \succsim \alpha b \oplus (1-\alpha) c \quad     &
			\Leftrightarrow^{def.}             &                                                 &
			\quad \alpha a \oplus (1-\alpha) c & \succsim^{\dagger} \alpha b \oplus (1-\alpha) c
		\end{alignat}
	\end{proof}
	\begin{lemma}[Proof Axiom Correspondence]
		The reactive events fulfill proof sufficiency and proof
    conjunction in $\succsim$ if and only if the sensitive events fulfill
    proof sufficiency and proof conjunction in $\succsim^{\dagger}$.
	\end{lemma}
	\begin{proof}
		$\overline{E}$ is sensitive if $\beta E \gamma \succ^{\dagger} \beta E \cup F \gamma $ for
    all nonnull events $F \subset \overline{E}$. $E$ is reactive in $\succsim$ if $\beta {E \cup
      F} \gamma \succ \beta E \gamma$ for all nonnull events $F \subset \overline{E}$. Thus,
    $\overline{E}$ is sensitive in $\succsim^{\dagger}$ if and only if $E$
    is reactive in $\succsim$.

		Suppose $E$ and $F$ are sensitive in $\succsim^{\dagger}$. Then
    $\overline{E}$ and $\overline{F}$ are reactive in $\succsim$. Thus,
    $\overline{E} \cap \overline{F}$ is reactive in $\succsim$. It follows that
    the complement $E \cap F$ is sensitive in $\succsim^{\dagger}$ and thus
    $\succsim^{\dagger}$ fulfills proof conjunction.

    The equivalence of proof sufficiency then directly follows since
    the symmetric parts of $\succsim$ and $\succsim^{\dagger}$ are identical.
	\end{proof}
	Then $\succsim^{\dagger}$ fulfills comonotonic independence, proof
  conjunction, and proof sufficiency. It follows that $\succsim$ can be
  represented by:
	\begin{align}
		\label{eq:ReverseRepresentation}
		U(a) = & - \int_{s \in \mathscr{S}^*} \max_{E \in \mathscr{V}: s \in E} \min_{s \in E} u(a(s)) d\mu \nonumber     \\
		=      & - \int_{s \in \mathscr{S}^*} \max_{E \in \mathscr{V}: s \in E} - \max_{s \in E} - u(a(s)) d\mu \nonumber \\
		=      & \int_{s \in \mathscr{S}^*} \min_{E \in \mathscr{V}: s \in E} \max_{s \in E} - u(a(s)) d\mu
	\end{align}
  Redefining the utility over consequences yields the desired result.
\end{proof}

\section{Proof of Proposition \ref{prop:risk}}
This result is a standard result. The only interesting question is
whether the definition of comparative risk preference is compatible
with different verifiable events for each decision maker. Note that
the definition of comparative risk preference only depends on
preferences over constant acts. Further, both in expected verification
utility and expected obfuscation utility the definition of the
constant act $1/2 x \oplus 1/2 y$ does not depend on the verifiable events.
Thus, risk attitudes are comparable across decision makers with
different verifiable events and across decision makers with
obfuscation and certification utility.

\section{Proof of Proposition \ref{prop:comparative}}
\begin{proof}
  We prove the result for expected certification utility, the
  derivation for expected obfuscation utility is analogous.

	$1 \Rightarrow 2:$ If for a nonnull $F$, $\gamma E \beta \sim^2 \gamma (E - F) \beta,$ then $E$
  is not sensitive for decision maker 2. Since the sensitive events of
  decision maker 2 contain all sensitive events of decision maker 1,
  $E$ cannot be sensitive for decision maker 1. Thus, $\gamma E \beta \sim^1 \gamma
  (E-F) \beta$.

	$2 \Rightarrow 1:$ If $E$ is sensitive for decision maker 1 but not sensitive
	for decision maker 2, then for decision maker $2$ there exists some
	nonnull event $F \subset E$ such that $\gamma E \beta \sim^2 \gamma (E - F) \beta$. But then also $\gamma E \beta
		\sim^1 \gamma (E - F) \beta$ and thus $E$ cannot be sensitive for decision maker
	1. Thus, the sensitive sets of decision maker 1 are a subset of the
	sensitive sets of decision maker 2.

\end{proof}

\section{Proofs of Corollaries \ref{coro:loss}-\ref{coro:indeterminacy}}
\begin{proof}
  For the proof of Corollary \ref{coro:loss}, consider a decision
  problem $A$ in which the loss given a certification-seeking decision
  maker is nonzero. Then there exists an event $E$ such that no subset
  is verifiable but a superset of which is verifiable. In particular,
  let $F$ and $G$ be disjoint proper subsets of $E$. Consider the acts
  $\gamma F \beta G \delta$ and $\gamma G \beta F \delta'$ with $\gamma \succ \beta$. Assume that the utility
  difference between $\delta$ and $\delta'$ is small. An expected utility
  maximizing DM strictly prefers the former to the latter if and only
  if $\mu(F) > \mu(G)$ (up to the small utility difference between $\delta$ and
  $\delta'$). A certification seeking decision maker strictly prefers the
  former to the latter if and only if $\beta \succ \delta \succ \delta'$. An obfuscation
  seeking decision maker strictly prefers the former to the latter if
  and only if $\delta \succ \delta' \succ \gamma$. Thus, it is possible that the expected
  utility maximizer agrees on the optimal action with the verification
  seeker or the obfuscation seeker without agreeing with the other.

  The proof of \ref{coro:perfect} is trivial, the expected utility of
  the maximizer of $U$ can at most be the maximum expected utility on
  a decision problem. Since if every event is verifiable the decision
  maker maximizes expected utility, the result follows.

  The right inequality of Corollary \ref{coro:indeterminacy} is
  trivial because the decision maker with fewer verifiable events will
  ignore some gains that the other decision maker will not ignore. The
  left inequality follows along the lines of the example provided in
  the main text: for some fixed $\mu$, if $cl_{\cup} \mathscr{V} \subset cl_{\cup}
  \mathscr{V}'$ then there exists an event $E \in \mathscr{V}'$ but not
  in $\mathscr{V}$ and a minimal superset $F \in \mathscr{V}$ of $E$. It
  is then straightforward to construct beliefs and menus of acts for
  which expected utility maximization agrees with $\succsim$ but not $\succsim'$ as
  long as not all states are verifiable.
\end{proof}

\section{Axioms for Biseparable Preferences}
\label{app:biseparable}
In the main text, a biseparable preference is assumed. Here we restate the
axioms used by \textcite{ghirardato_risk_2001} to characterize biseparable
preferences. The axioms are slightly adjusted to fit the notation of the present
paper.
\begin{axiom}[Preference Relation]
  $\succsim$ is a {\em nontrivial preference relation} if it is a complete, transitive,
  nontrivial binary relation.
\end{axiom}

\begin{definition}[Dominant Acts]
  An act $f$ {\em dominates} an act $g$ if for all $s \in \mathscr{S}$, $f(s) \succsim g(s)$.
\end{definition}

\begin{axiom}[Dominance]
  $\succsim$ fulfills {\em dominance} if for all $f,g \in \mathscr{A}$, whenever $f$ dominates $g$,
  then $f \succsim g$.
\end{axiom}

\begin{axiom}[Eventwise Monotonicity]
  $\succsim$ fulfills {\em eventwise monotonicity} if for all $E \in \mathscr{E}$ and $x, y, z \in \mathscr{X}$
  such that $y \succsim z$,
  \begin{itemize}
    \item if $E$ is nonnull, then $x \succ y$ implies $xEz \succ yEz$, and
    \item if $E$ is nonuniversal, then $x \succ z$ implies $yEx \succ yEz$.
  \end{itemize}
\end{axiom}

Let $\mathscr{X}$ be a topological space and let $\mathscr{X}^{\mathscr{S}}$ be endowed with the
product topology.
\begin{axiom}[Continuity]
  $\succsim$ fulfills {\em continuity} if for all nets $\{f_{\alpha}\}_{\alpha \in D} \subseteq X^{\mathscr{S}}$ such
  that $f_{\alpha}$ and $f$ are measurable with respect to the same finite partition,
  whenever $f_{\alpha} \succsim g$ ($g \succsim f_{\alpha}$) for all $\alpha \in D$, then $f \succsim g$, ($g \succsim f$).
\end{axiom}

Continuity ensures the existence of certainty equivalents and
therefore the subjective mixtures
\parencite{nakamura_subjective_1990,ghirardato_subjective_2003,ghirardato_general_2020}
are well defined:
\begin{definition}[Subjective Mixture]
  A {\em subjective mixture} is an act $f \oplus_E g : s \mapsto [{f(s) E g(s)}]$.
\end{definition}

\begin{axiom}[Binary Comonotonic Act Independence]
  $\succsim$ fulfills {\em binary comonotonic act independence} if for all essential
  events $E \in \mathscr{E}$, every $F \in \mathscr{E}$, and for all pairwise comonotonic acts $xEy,
    x'Ey', x''Ey''$, if $x''Ey''$ is dominated by both $xEy$ and $x'Ey'$ or
  dominates both $xEy$ and $x'Ey'$, then
  \begin{align}
    \label{eq:BinaryComonotonicActIndependence}
    xEy \succsim x'Ey' \quad \Rightarrow \quad (x E y) \oplus_B (x'' E y'') \succsim (x'Ey') \oplus_B (x'' E y'').
  \end{align}
\end{axiom}

\begin{theorem}[Biseparable Utility Representation \parencite{ghirardato_risk_2001}]
  Suppose $\succsim$ is a relation on $\mathscr{A}$ and there is at least one essential event $E
    \in \mathscr{E}$, then the following statements are equivalent:
  \begin{enumerate}
    \item $\succsim$ is a nontrivial preference relation that fulfills dominance,
          eventwise monotonicity, continuity, and binary comonotonic act independence.
    \item $\succsim$ is a biseparable preference.
  \end{enumerate}
\end{theorem}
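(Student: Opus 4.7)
The plan is to establish the equivalence by the usual route: the easy direction is to verify the axioms from the representation, and the hard direction is to build the representation from the axioms. Since the statement is the theorem of \textcite{ghirardato_risk_2001}, I will follow the structure of their argument while making the key steps explicit.

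For the direction (2) $\Rightarrow$ (1), I would take the monotonic representation $U$ satisfying $U(\gamma F \beta) = \mu(F) U(\gamma) + (1-\mu(F)) U(\beta)$ and check each axiom in turn. Completeness and transitivity are immediate from $U$ being a real-valued representation. Dominance and eventwise monotonicity follow from $U$ being monotonic plus the explicit form on binary acts (the two bullets in eventwise monotonicity correspond to $\mu(E) > 0$ and $\mu(E) < 1$, which are equivalent to $E$ being nonnull/nonuniversal when $E$ is essential). Continuity follows because, on any finite partition, $U$ restricted to acts measurable with respect to that partition is continuous in the product topology by convexity of $U(\mathscr{X})$. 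Binary comonotonic act independence is where the biseparable form bites: on a fixed essential event $E$, the subjective mixture $\oplus_B$ corresponds in utility terms to taking the arithmetic mean with weight $\mu(B)$, so the claimed implication reduces to the obvious linear cancellation in $\mathbb{R}$, provided the dominance hypothesis keeps all three acts comonotonically ordered so that the same $\mu(E)$-weighting is used throughout.

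For the harder direction (1) $\Rightarrow$ (2), the plan is constructive. First, I would use continuity together with dominance and eventwise monotonicity to show that certainty equivalents exist: for any act $a$, the sets $\{x \in \mathscr{X} : x \succsim a\}$ and $\{x \in \mathscr{X} : a \succsim x\}$ are closed by continuity, their union is $\mathscr{X}$ by completeness, and by dominance neither is empty (take constant acts pointwise above or below $a$), so connectedness of the image of a suitable path yields a common element. Second, I would pick an essential event $E^*$ (which exists by hypothesis) and use binary comonotonic act independence on binary acts of the form $xE^*y$ with a fixed reference pair $\gamma^* \succ \beta^*$ to derive a Cauchy-type cancellation equation; solving this equation under continuity yields a utility $U : \mathscr{X} \to \mathbb{R}$ with convex image, unique up to positive affine transformations, that represents $\succsim$ on constant acts. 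Third, I would define $\mu(F) \equiv \frac{U([\gamma^* F \beta^*]) - U(\beta^*)}{U(\gamma^*) - U(\beta^*)}$ for each event $F$; eventwise monotonicity guarantees $\mu(F) \in [0,1]$, and $\mu(\emptyset) = 0$, $\mu(\mathscr{S}) = 1$.

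The main obstacle, and the technical heart of the argument, is showing that the single number $\mu(F)$ extracted from the reference consequences $\gamma^*, \beta^*$ actually delivers the biseparable identity $U(\gamma F \beta) = \mu(F) U(\gamma) + (1-\mu(F)) U(\beta)$ for \emph{every} pair $\gamma \succsim \beta$ and \emph{every} event $F$ (not just $F = E^*$ or the reference pair). The strategy here is to apply binary comonotonic act independence iteratively: starting from the constructed representation on binary acts over $E^*$, use subjective mixtures $\oplus_B$ with $B = E^*$ to "transport" the weighting coefficient to binary acts over an arbitrary $F$, exploiting that the dominance-and-comonotonicity hypothesis in the axiom lets us cancel a common mixand; the equality $U([\gamma F \beta]) = \mu(F) U(\gamma) + (1-\mu(F)) U(\beta)$ then follows by invoking uniqueness of the utility representation up to positive affine transformations. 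Finally, monotonicity of $U$ on all of $\mathscr{A}$ follows from the dominance axiom extended via continuity and finite approximations, and convexity of $U(\mathscr{X})$ follows from continuity and the existence of certainty equivalents for the subjective mixtures $[{x E^* y}]$. I expect the reference-invariance step --- showing that changing $(\gamma^*, \beta^*)$ or $E^*$ leaves $\mu$ unchanged --- to be where the argument is most delicate and where binary comonotonic act independence is used in its full strength.
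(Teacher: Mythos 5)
This theorem is not proved in the paper at all: it is an imported result, stated in the appendix only to make the paper self-contained, and attributed to \textcite{ghirardato_risk_2001}. So there is no in-paper argument to compare yours against; the relevant benchmark is the original proof of Ghirardato and Marinacci. Measured against that, your outline has the right overall architecture --- certainty equivalents from continuity, a cardinal utility on consequences extracted from binary acts over a fixed essential event, a willingness-to-bet $\mu(F)$ defined from certainty equivalents of bets on a reference pair, and then the ``transport'' step showing the weight is independent of the pair of consequences --- and you correctly identify that last step as the crux where binary comonotonic act independence is used in full strength.

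That said, what you have is a scaffold rather than a proof, and two specific points need repair. First, in the direction $(2) \Rightarrow (1)$, continuity of $\succsim$ does not follow from convexity of $U(\mathscr{X})$; the definition of a biseparable preference as stated here does not require $U$ to be continuous, so this implication needs either a continuity clause in the definition (as in the original paper) or a separate argument. Second, the heart of $(1) \Rightarrow (2)$ --- your ``Cauchy-type cancellation equation'' --- is in the original handled by additive conjoint measurement on the comonotonic cone of binary acts $\{xEy : x \succsim y\}$ (tradeoff-consistency/hexagon-condition machinery), which also requires connectedness and separability of $\mathscr{X}$, hypotheses you invoke implicitly (``connectedness of the image of a suitable path'') but which are not stated in the appendix's continuity axiom. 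Naming these steps is not the same as executing them; as written, the proposal would not compile into a complete proof without importing that machinery wholesale.
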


\end{appendices}
\printbibliography
\end{document}